\newtheorem{theorem}{Theorem}
\theoremstyle{definition}
\theoremstyle{remark}
\newtheorem{remark}{Remark}
\begin{document}

\title{Vehicle Following On A Ring Road Under Safety Constraints: Role of Connectivity and Coordination}
\author{Milad~Pooladsanj,
        Ketan~Savla,
        and~Petros~A.~Ioannou% <-this % stops a space
\thanks{M. Pooladsanj and P. A. Ioannou are with the Department
of Electrical Engineering, University of Southern California, Los Angeles, CA 90007 USA e-mail: (pooladsa@usc.edu; ioannou@usc.edu).}% <-this % stops a space
\thanks{K. Savla is with the Sonny Astani Department of Civil and Environmental Engineering, University of Southern California, Los Angeles CA 90089 USA (e-mail: ksavla@usc.edu). K. Savla has financial interest in Xtelligent, Inc.} %\kscomment{K. Savla has financial interest in Xtelligent, Inc.}}% <-this % stops a space
\thanks{
This work was supported in part by NSF CMMI 1636377 and METRANS 19-17.}
%\kscomment{This work was supported in part by NSF CMMI 1636377 and METRANS 19-17.}}
}\maketitle
\begin{abstract}
A fundamental problem in traffic networks is driving under safety and limited physical space constraints. In this paper, we design longitudinal vehicle controllers and study the dynamics of a system of homogeneous vehicles on a single-lane ring road in order to understand the interplay of limited space, speed, and safety. Each vehicle in the system either operates in the cruise control mode or follows a vehicle ahead by keeping a safe time headway. We show that if the number of vehicles is less than a certain critical threshold, vehicles can occupy the limited space in many different configurations, i.e., different platoons of different sizes, and they converge to a uniform maximum speed while attenuating errors in the relative spacing upstream a platoon. If the number of vehicles exceeds the threshold, vehicles converge to a unique symmetric configuration and the equilibrium speed decreases as the number of vehicles increases. Next, we consider vehicle-to-vehicle (V2V) communication and show that it increases the critical number of vehicles that can travel with the maximum speed. Finally, we consider central coordination and show that the proposed controllers can force vehicles to converge to a desired configuration specified by the coordinator while maintaining safety and comfort. We demonstrate the performance of the proposed controllers via simulation.
%In this paper, we investigate the relationship between safety and boundedness of space via the analysis of platoons of vehicles over a ring road. For a constant time headway controller, it is shown that the equilibrium speed of vehicles is inversely proportional to the headway constant and the number of vehicles on the road. Moreover, the equilibrium relative spacing is found to be equal for all vehicles. We show that, by careful choice of controller parameters, the equilibrium is asymptotically stable even when there is no communication of acceleration between vehicles. We also consider the role of a central coordinator with a desired equilibrium relative spacing and speed on the road. A constant spacing controller is then designed which guarantees asymptotic convergence to the desired configuration. A significant feature of the designed controllers is their ability to reject constant or low-frequency disturbance. It is shown that the effect of disturbance, if not rejected, is more pronounced over a bounded space. The transient and steady state performance of the designed controllers is shown by simulation.
\end{abstract}
\section{Introduction}\label{Section: Introduction}
Traffic congestion has costed billions of dollars, hours, and gallons of fuel in the past years \cite{schrank2015}. While this congestion is a direct consequence of high travel demand competing to utilize the limited supply of road networks in a safe manner \cite{ALASIRI2020, varaiya1993smart}, it is magnified by poor human drivers' response to various disturbance \cite{Malikopoulos.2017}. It has been reported that Connected and Autonomous Vehicles (CAVs) have the potential to compensate for human errors and corresponding delays to effectively improve the throughput and capacity of highways \cite{Malikopoulos.2017}. However, the analysis of the limited capacity of road networks, which we shall refer to as \emph{bounded space}, in conjunction with safety constraints has received little attention in microscopic traffic studies of CAVs. 
%%%%Jan18%%%%The purpose of this paper is to address this problem by studying the dynamics of CAVs in a simple abstraction for bounded space. 
%%%%Jan5%%%%The proposed technologies for CAV develop the potential to control the state of traffic at the microscopic scale. Naturally, any control strategy at the microscopic scale alters macroscopic characteristics of the traffic. Hence, it has implications in macroscopic control strategies. It is therefore of interest to develop frameworks for performance evaluation of CAV considering relevant macroscopic constraints. One such constraint is the physical limitation of space, which we shall refer to as \emph{bounded space}, in traffic networks. 
%Moreover, V2V communication enables a vehicle to respond faster to possible changes in its surrounding vehicles, hence reducing traffic congestion and the related monetary and environmental costs \cite{Malikopoulos.2017}. It is clear that 
\par
The impact of CAVs on traffic flow in the longitudinal direction is often evaluated by considering autonomous vehicles on an unbounded single lane road with no passing. The majority of research in this direction consider a platoon of vehicles and assume that the leader of the platoon follows a desired speed trajectory. The objective is then to design state-feedback throttle/brake controllers for the following vehicles such that they can adjust their speed to the speed of the leader while keeping a safe distance from the next vehicle \cite{Ioannou.Xu.1994, Ioannou.Chien.1993}. The dynamical analysis of the following vehicles provide results on collision avoidance, attenuation of errors upstream the platoon, the effect of delay in the system performance, ride comfort, and the impacts of integration of communication channels \cite{Ioannou.Xu.1994,Ioannou.Chien.1993,Swaroop.Hedrick.1999,Shaw.Hedrick.2007,Tan.Rajamani.Zhang.1998,Seiler.Pant.Hedrick.2004,
Barooah.Mehta.Hespanha.2009,Zhang.Kosmatopoulos.Ioannou.Chien.1999,Lin.Fardad.Jovanovic.2012,Chu.1974,Seiler.2001}. The aforementioned research efforts aim their attention at evaluating the performance of CAVs when they are already in the vehicle following mode. In practice, the performance of CAVs in handling different situations such as switching between different modes of operation must be also taken into account. In \cite{raza1996vehicle}, a supervisory controller was designed and analyzed which was responsible for interacting with the throttle/brake controller, choosing the proper mode of operation, e.g., cruise control or vehicle following, and the transition between these modes, and detecting any irregular behavior such as an emergency stopping situation. In all of these studies, however, increasing the number of vehicles does not affect the speed or density since an infinite space is assumed. In other words, the aforementioned analyses describe, at best, the behavior of CAVs under the safety constraint but not the space constraint.
\par
The analytical understanding of the interplay between the safety constraint, speed limit, and space limitation by using a simple road geometry will help analyze these effects for more complicated road geometries and networks where space is limited. A simple, but practical, such setup is a ring road. In the experiment described in \cite{Sugiyama2008TrafficJW}, a single lane ring road was used to show the formation of stop-and-go waves when all vehicles are human-driven and there is no bottleneck. Inspired in part by \cite{Sugiyama2008TrafficJW}, there has recently been dynamical analysis on this setup for mixed-autonomy settings \cite{Cui2017, zheng2018smoothing, stern2018dissipation}. The foci of the analytical aspects of these works, however, is on the formation and dissipation of traffic jams using autonomous vehicles for a high density scenario, without explicit consideration of safety. Earlier work from the authors \cite{pooladsanj2020vehicle} explicitly included the safety constraint for a simple vehicle model. However, the impact of V2V communication and central coordination of vehicles on the bounded space was not addressed. Practical bounds on the acceleration of vehicles was not considered either.
\par
In this paper, we consider homogeneous automated vehicles on a closed single lane ring road. We adopt a nonlinear vehicle model with first-order engine dynamics from \cite{CAUDILL1976195} derived from Newton's second law of motion. We design state-feedback control laws based on feedback linearization to control the throttle and brake commands. Three different scenarios are considered. In the first scenario, we assume that vehicles do not communicate and there is no coordination. Each vehicle either follows a constant speed trajectory, i.e., is in the cruise control mode, or safely follows the vehicle ahead, i.e., the vehicle following mode, by keeping a safe headway. Transitioning between modes of operation is determined by a combination of relative spacing and speed signals and is handled by the vehicle's supervisory controller as discussed in \cite{raza1996vehicle}. It is analytically shown that the equilibrium of this dynamical system leads to the well-known triangular fundamental diagram. In other words, the interplay of bounded space, speed limit, and safety can be quantified in a straightforward manner with this problem formulation. We explicitly characterize the critical density $\rho_c$ of the fundamental diagram, at which the flow is maximized, in terms of system parameters (time headway constant, free flow speed, minimum standstill safety distance, and length of each vehicle). %%%%Jan18%%%%We show that if the density is less than $\rho_{c}$, i.e., low-density scenarios, vehicles can occupy the limited space in platoons of, possibly, different sizes with different inter-platoon spacings, which depend on the initial condition of the system. On the other hand, when the density is higher than $\rho_c$, i.e., high-density scenarios, vehicles form a \emph{doubly-connected} platoon with no leader, independent of the initial condition. 
In the second scenario, we assume that vehicles are also able to communicate their braking capabilities and their instantaneous acceleration with their immediate predecessor. We show that V2V communication increases $\rho_c$ and the capacity of the road compared to the first scenario, thus it enhances the free flow region of the fundamental diagram. In the final scenario, we assume that a central coordinator communicates the desired platoon formations and inter-platoon spacings to certain vehicles. 
%%%%Jan18%%%%We propose suitable transition logic diagrams handled by the supervisory controller similar to that in \cite{raza1996vehicle}. Vehicles use the proposed logic and the controller designed in the first scenario to safely achieve the desired configuration.
We prove that the designed controllers guarantee robust speed tracking and/or vehicle following, and attenuation of errors in the desired relative spacing, speed, and acceleration upstream a platoon while satisfying desired acceleration bounds for all three scenarios. We demonstrate the performance of the controllers by simulating different scenarios.
\par
\par
%\ksmargin{needs integration with the rest of the intro; appears out of nowhere}
The contributions of this paper can be summarized as follows:\par
%1. \mpcomment{Clarifying the notions of "stability" and "string stability" in interconnected dynamical systems} \par
\begin{enumerate}
    \item We illustrate the utility of a ring road setup to analyze the impact of space limitation, speed limit, safety constraint, and V2V communication on flow and density in a straightforward manner;
    \item We show that when every vehicle seeks to attain the maximum possible speed while respecting speed limit and safe distance to the vehicle in front, then the emergent configuration of inter-vehicle spacing is unique at high density but not at low density; 
%    We show that vehicles can occupy the limited space in many different configurations which depend on the number of vehicles and their actions to travel at the highest possible speed without exceeding a speed limit while maintaining safe intervehicle spacing
    \item We describe a protocol by which a central coordinator can safely achieve a desired configuration at low density.
\end{enumerate}
%2. Quantifying the interplay of physical limitation of space in traffic networks, safety, and the role of V2V communication \par
%3. Investigating the role of coordination between vehicles when physical space limitation is explicitly considered in the problem formulation
\par
The rest of the paper is outlined as follows. In section \ref{Section: Problem Formulation}, we state the problem formulation and control objectives. In Section \ref{Section: No Coordination}, we first consider the case where vehicles do not communicate and there is no coordination on the ring road. We next extend the analysis for the case where V2V communication is possible. In Section \ref{Section: Coordination}, the role of central coordination on the ring road is evaluated and suitable transition logic are proposed. Section \ref{Section: Simulation} provides simulation results for these scenarios. We conclude the paper and discuss future directions in section \ref{Section: Future Work}.

\section{Problem Formulation}\label{Section: Problem Formulation}
%\subsection{Background and Motivation}\label{Subsection: Background and Motivation}
%%%%Consider a vehicle following scenario in which $n$ homogeneous vehicles, each with a length of $L$, are travelling on a single lane road with no passing. Assume that the vehicle numbered $n$ is the leader and other vehicles are followers, e.g., see Figure \ref{Fig: InterconnectedSys}. 
\subsection{Basic Notations}\label{Subsection: Basic Notations}
Consider $n$ homogeneous vehicles of length $L$, on a closed ring road. Without loss of generality, assume that the perimeter of the ring road is $P$ for some $P > nL$. We assign coordinates over the distance interval $[0,P]$ to the ring road in the clock-wise direction. Let $\mathcal{N} = \{1, 2, \cdots, n\}$ be the set of vehicles' indices, where vehicle $i$ is the $i^{th}$-closest vehicle to point $0$ at time $t = 0$. Let $0 \leq x_i(t) < \infty$ denote the distance traveled by the $i^{th}$ vehicle with respect to a fixed reference point on the roadside (without loss of generality we assume that this reference point is the point $0$), and $v_i(t)$, $a_i(t)$ denote the speed and acceleration at time $t \geq 0$, respectively. Moreover, let $y_i(t) = x_{i+1}(t) - x_i(t) - L$ be the relative spacing of the $i^{th}$ vehicle with respect to the vehicle $i+1$ ahead at time $t \geq 0$, where $x_{n+1} \equiv P + x_1$ due to the periodicity of the ring road. Throughout the paper, except when needed, we use $x_i$, $v_i$, $a_i$, and $y_i$ without explicitly mentioning their dependence on time. For simplicity of notations, we formulate the vehicle model and controller design for an ego vehicle with subscript $e$ and use the subscript $l$ in order to differentiate between the ego vehicle and its vehicle ahead, i.e., the lead vehicle.
%
%and $\delta_i = y_i - (hv_i + S_o)$ be the $i^{th}$ vehicle's speed, acceleration, and the error in the desired relative spacing, respectively.
Note that by definition, $\sum_{i = 1}^{n} y_i = P - nL$. This constraint is the main contrast to a straight line with no space limitation. An illustration of this setup for three vehicles is depicted in Figure \ref{Fig: config}.  %\ksmargin{haven't defined the problem yet}
\par
 \begin{figure}[H]
\centering
\includegraphics[width=0.4\textwidth]{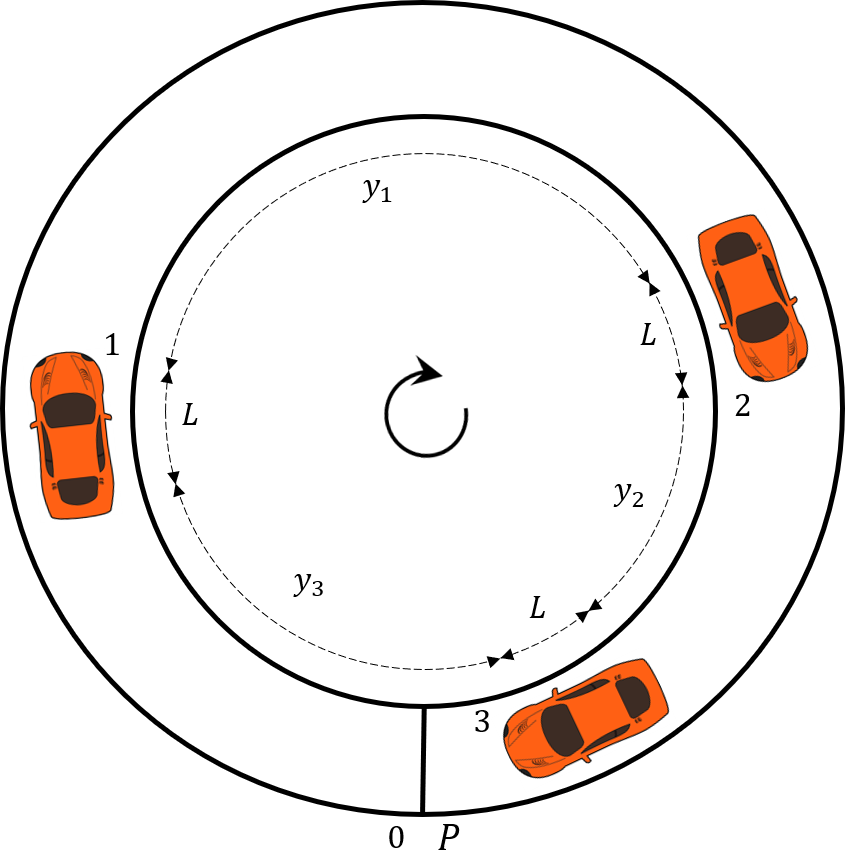}
\caption{\sf Example of three vehicles on a closed ring road setup}\label{Fig: config}
\end{figure}
%%%%\begin{figure}[H]
%%%%\centering
%%%%\includegraphics[width=0.6\textwidth]{}
%%%%\vspace*{2mm}
%%%%\caption{\sf A platoon of $n$ vehicles with vehicle numbered  $n$ as the leader}\label{Fig: InterconnectedSys}
%%%%\end{figure}
%%%%Dec30%%%%We assume throughout the paper that each vehicle uses its sensing capabilities with/without V2V communication and some higher level logic to assess whether the next vehicle is accelerating, braking, or traveling at a constant speed. 
 \begin{figure}[th]
\centering
\includegraphics[width=0.6\textwidth]{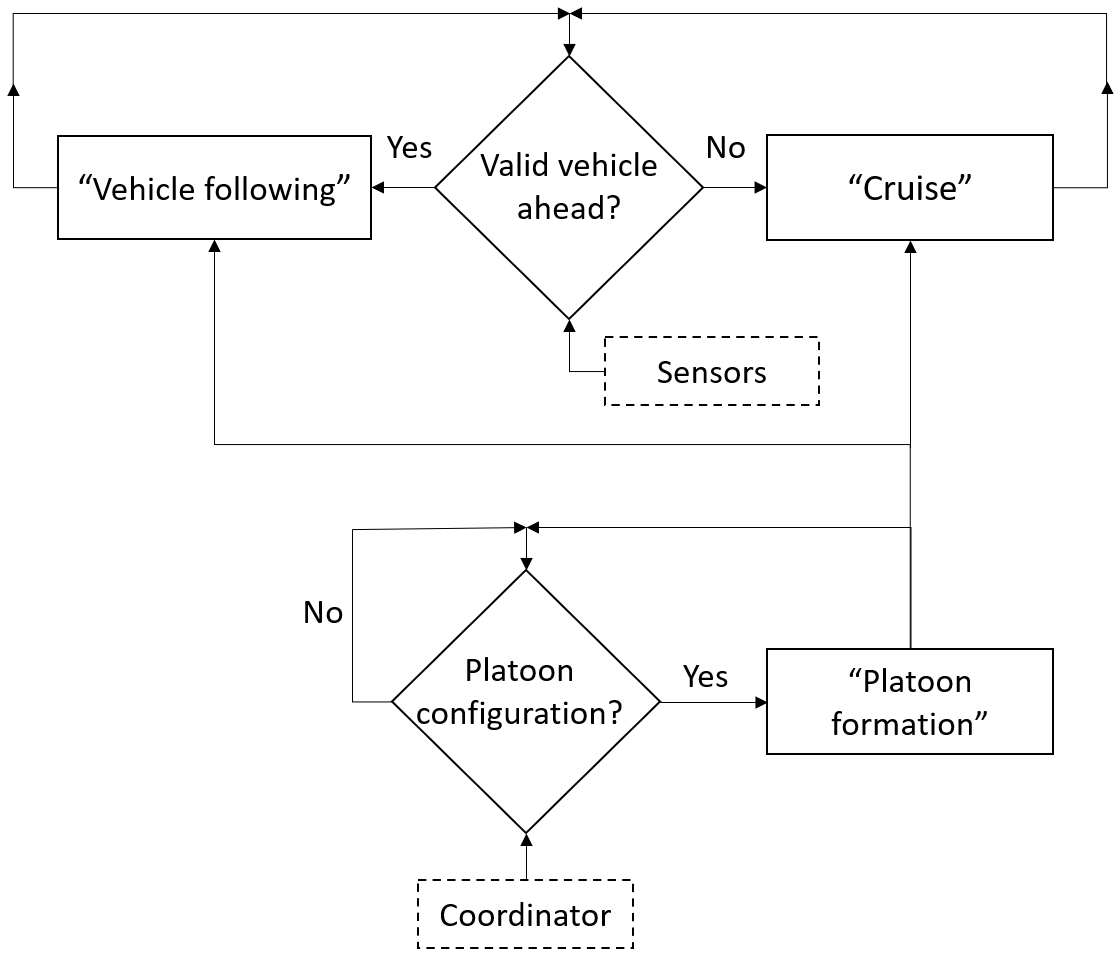}
\vspace{0.2 cm}
\caption{\sf Logic diagram for determining the mode of operation}\label{Fig: Logic for Mod of Operation}
\end{figure}
\subsection{Modes of Operation}\label{Subsection: Mode of Operation Selection}
Each vehicle operates in one of the following two modes of operation: cruise control or vehicle following, see Figure \ref{Fig: Logic for Mod of Operation}. If there is no central coordination, an ego vehicle operates in the cruise mode if no valid vehicle is ahead that is within its sensing range. The validity of the lead vehicle is determined by comparing the relative spacing to a design threshold value. The ego vehicle is in the vehicle following mode, i.e., it follows the lead vehicle by keeping a safety distance, as long as the lead vehicle's speed is within the allowable speed limit. The speed limit is taken to be equal to the free flow speed $V_f$ when there is no coordination. The platoon formation state in Figure \ref{Fig: Logic for Mod of Operation} is activated in order to achieve a desired platoon formation when a central coordinator is present. This state will be discussed in detail in Section \ref{Section: Coordination}. \par
\subsection{Vehicle Model}\label{Subsection: Vehicle Model}
We assume that the road surface is horizontal and there is no wind gust. We use Newton's second law of motion for the ego vehicle to write,
\begin{equation*}\label{Eq: (Acceleration Control)Newton's second law of motion}
  m_ea_e = F_e - k_{d}v^2_e - d_{m}(v_e)
\end{equation*}
where $F_e$ is the engine force, $m_e$ is the mass, $k_{d}$ is the aerodynamic drag coefficient, and $d_{m}(v_e)$ is the mechanical friction of the ego vehicle travelling with the speed $v_e$ \cite{CAUDILL1976195}. Assuming a first-order engine dynamics we have,
\begin{equation*}\label{Eq: (Acceleration Control)Engine dynamics}
  \dot{F}_e = \frac{1}{\tau(v_e)}(\theta_e - F_e)
\end{equation*}
where $\theta_e$ is the throttle angle's force to the engine, and $\tau(v_e)$ is the engine's time constant at the speed $v_i$ \cite{Sheikholeslam:M91/115}.

%%%%Jan19%%%%This simple engine dynamics has been useful in platoon studies \cite{Sheikholeslam:M91/115}.
By combining the last two equations we derive,
\begin{equation*} \label{Eq: Vehicle model} 
%\begin{align}
%  \frac{d}{dt}x_i &= \dot{x}_i, \label{Eq:(Acceleration Control)Speed dynamics for vehicle i} \\
%  \frac{d}{dt}\dot{x}_i &= \ddot{x}_i \label{Eq:(Acceleration Control)Acceleration dynamics for vehicle i}\\
  \dot{a}_e = \beta(v_e,a_e) + \alpha(v_e)\theta_e %+ d_i %\label{Eq:(Acceleration Control)Jerk dynamics for vehicle i}
%\end{align}
\end{equation*}
where,
%\ksmargin{expressions for $\alpha$ and $\beta$ seem to have typos: $\alpha$ should have a minus, $\beta$ should have derivative of $d_{m_i}(v_i)$, etc.}
\begin{equation*}\label{Eq: (Acceleration Control)alpha(v_i)}
\begin{split}
   \alpha(v_e)  & = \frac{1}{m_e \tau(v_e)} \\
    \beta(v_e,a_e)  & = -2\frac{k_{d}}{m_e}v_e a_e - \frac{1}{m_e}\dot{d}_{m}(v_e) - \frac{1}{\tau(v_e)}[a_e + \frac{k_{d}}{m_e} v^2_e + \frac{d_{m}(v_e)}{m_e}]
\end{split}
\end{equation*}
%and $d_i$ accounts for unmodelled dynamics and external disturbance.\par
\par
At each speed $v_e$, throttle angle is chosen such that,
\begin{equation*}\label{Eq: (Acceleration Control)(Time) nonlinear feedback of u_i}
    \theta_e = \frac{1}{\alpha(v_e)}[u_e - \beta(v_e,a_e)]
\end{equation*}
which leads to the equation,
\begin{equation} \label{Eq: Simplified Vehicle dynamics} 
\dot{a}_e = u_e %+ d_i
\end{equation}
where $u_e$ is to be designed to meet the control objectives presented below: 
\begin{enumerate}
    \item  \textbf{Safety}: no rear-end collision under a worst-case stopping scenario as explained in  \cite{Ioannou.Chien.1993} 
    \item \textbf{Smooth longitudinal maneuver}: Smooth position and/or speed tracking in the two modes of operation as well as a smooth transition between these modes
%%%%Jan29%%%%    $lim_{t \rightarrow \infty} v_i(t) = V_f$ in the cruise control mode, and $lim_{t \rightarrow \infty} \delta_i(t) = 0$, $lim_{t \rightarrow \infty} v_i(t) = \min\{v_{i+1}, V_f\}$ in the vehicle following mode
    \item \textbf{String error attenuation}: attenuation of the amplitude of errors, e.g., in the position, upstream a platoon
    \item \textbf{Passenger comfort}: $a_{min} \leq a_e \leq a_{max}$, except in an emergency braking scenario, and small jerk $\dot{a}_e$ \cite{Ioannou.Xu.1994} 
    %%%%Jan19%%%%and $|a_i(t)| \leq |a_{i+1}(t)| + e_i(t)$ in the vehicle following mode, where $e_i(t)$ is exponentially vanishing
\end{enumerate}
%5. Rejection of constant or low-frequency disturbance in the vehicle model
\par
%%%%Mar11%%%%Among the aforementioned controller objectives, the first three objectives have priority over the fourth objective. 
%%%%Feb8%%%%\begin{remark}
%%%%Feb8%%%%In the longitudinal vehicle model presented, we have not considered the effect of brake system. Subsequently, the controller that we consider in this paper is only 
%%%%Feb8%%%%\end{remark}
In the following sections, we design and analyze control laws that can meet the objectives with and without V2V communication and in the presence of a central coordinator.
%\ksmargin{move before the objectives}
%We assume that each vehicle either operates in a cruise control mode with the cruising speed of $V_f$, or in a vehicle following mode by keeping a safe time headway from the next vehicle. Therefore, $y_{des_i} = hv_i + S_0$, when vehicle $i$ operates in the vehicle following mode.
%\input{arXiv_Prelim_IV20}
\section{Vehicles On a Ring Road Without Coordination}\label{Section: No Coordination}
%\ksmargin{section organization seems odd; would prefer to see three sections for the three control scenarios}
\subsection{No V2V Communication}\label{SubSection: No Coordination, controller design}
In this section, we assume that vehicles do not communicate with each other and obtain the necessary data for cruising or vehicle following by using their own sensing capabilities. When the mode of operation is determined as explained in Section \ref{Subsection: Mode of Operation Selection}, the sensing data are passed through appropriate filters \cite{raza1996vehicle} in order to generate continuous-time signals passed to the longitudinal controller $u_e$ designed as follows:
\begin{enumerate}
    \item \textbf{Cruise}: \par
    \begin{align}
%\begin{aligned}
%    c_i = K_a a_i + C_{p}(t)\delta_i + C_{v}(\hat{v}_{i+1} - v_i) 
%    + \int_{0}^{t}[C_{q}(\tau)\delta_{i} + C_{s}(\hat{v}_{i+1} - v_i)]d\tau
    %&\text{Cruise:} \nonumber\\
    u_e &= K_a a_e + C_{v}(v_{r} - v_e) 
    + \int_{0}^{t}[C_{s}(v_{r} - v_e)]d\tau \label{Eq: (Acceleration Control) Design of c_i}\\
    \dot{v}_r &= \text{sat}[p(V_s - v_{r})],~ v_r(0) = v_e(0) \label{Eq: (Acceleration Control) Acceleration limiter} \\
        \text{sat}[x] &= \begin{cases*}
       a_{max} & \mbox{if } $x \geq  a_{max}$ \\
       x & \mbox{if } $a_{min} < x < a_{max}$ \\
       a_{min} & \mbox{if } $x \leq a_{min}$
   \end{cases*} \label{Eq: (Acceleration Control) Saturation function definition}
%%%%Feb8%%%%    -J_{max}I\{a_e > a_{min}\} & \mbox{if } $B_C = 1$
%\end{aligned}
\end{align}
    \item \textbf{Vehicle following}: \par
    \begin{align}
   u_e &= K_a a_e + C_{p}(t)\delta_e + C_{v}(v_{r} - v_e) 
    + \int_{0}^{t}[C_{q}(\tau)
    \delta_{e} + C_{s}(v_{r} - v_e)]d\tau \label{Eq: (Acceleration Control) Design of c_i when B_F = 1}\\
     v_r &= v_l + (v_r(0) - v_l)e^{-\lambda t} \label{Eq: (Acceleration Control) reference speed when B_F = 1} \\
        \delta_e &= y_e - (hv_e + S_0) \label{Eq: (Acceleration Control) Reference spacing when B_F = 1}
\end{align}
\end{enumerate}
%%%%Feb1%%%%        \text{sat}[x] &= \begin{cases*}
%%%%Feb1%%%%        a_{max} & \mbox{if } $x \geq  a_{max}$ \\
%%%%Feb1%%%%        x & \mbox{if } $a_{min} < x < a_{max}$ \\
%%%%Feb1%%%%        a_{min} & \mbox{if } $x \leq a_{min}$
%%%%Feb1%%%%    \end{cases*} \\
%%%%Feb8%%%%       I\{&a_e > a_{min}\} =  \begin{cases*}
%%%%Feb8%%%%        1 & \mbox{if } $a_e > a_{min}$ \\
%%%%Feb8%%%%       0 & \mbox{otherwise } 
%%%%Feb8%%%%       \end{cases*} \nonumber
where $K_a < 0$, $C_v, C_s, p, \lambda > 0$ are design constants, $V_s$ is the speed limit and $V_s = V_f$.
%%%%Feb13%%%%he binary variable $B_F(k)$ indicates whether the vehicle following mode is on (if $B_F(k) = 1$) or not (if $B_F(k) = 0$), where $k$ indicates the sampling instant. The value of $B_F$ is determined as follows,
%%%%Feb13%%%%\begin{equation}\label{Eq: The logic signal B_F}
%%%%Feb13%%%%    B_F(k) = \begin{cases*}
%%%%Feb13%%%%   1 & \mbox{if } $\begin{aligned} 
%%%%Feb13%%%%      &\{B_F(k-1) = 0,~ y_e(k) \leq \Delta_d\} ~\text{or} \\ 
%%%%Feb13%%%%        &\{B_F(k-1) = 1,~ v_e(k) \leq V_s~ \text{at steady state operation}\}
%%%%Feb13%%%%    \end{aligned}$ \\ 
%%%%Feb13%%%%   0 & \mbox{otherwise } 
%%%%Feb13%%%%   \end{cases*}
%%%%Feb13%%%%\end{equation}
Moreover, the threshold distance $\Delta_d$ for switching from the cruise control mode to the vehicle following mode is chosen as follows,
\begin{equation}\label{Eq: Switching distance Delta_d}
    \Delta_d = \begin{cases*}
            hv_e + S_0 + r(v_{e} - v_l) & \mbox{if } $v_e \geq v_l$ \\
            hv_e + S_0 & \mbox{otherwise }
                \end{cases*}
\end{equation}
where $r > 0$ is a design constant. If the relative spacing of the ego vehicle with respect to the lead vehicle ahead is greater than $\Delta_d$ at $t = 0$, the ego vehicle starts operating in the cruise control mode and the speed tracking controller \eqref{Eq: (Acceleration Control) Design of c_i} is used. The reference speed $v_r$ in this case is generated by passing the desired speed limit $V_s$ through the nonlinear acceleration limiter filter \eqref{Eq: (Acceleration Control) Acceleration limiter} with the saturation function described in \eqref{Eq: (Acceleration Control) Saturation function definition}. The acceleration limiter prevents the acceleration outside the comfortable range when there is a large initial speed error $V_s - v_e(0)$ \cite{Ioannou.Xu.1994}. If the relative spacing becomes less than $\Delta_d$ at some time $t_0 \geq 0$, the ego vehicle switches to the vehicle following mode and the speed/position tracking controller \eqref{Eq: (Acceleration Control) Design of c_i when B_F = 1} is used. The design parameters $C_p(t), C_q(t)$ are smoothly increased from zero to some positive design constants $C_p, C_q > 0$, i.e.,
%%%%Feb8%%%%it decelerates with the maximum comfortable jerk $J_{max}$ until it reaches the minimum comfortable acceleration $a_{min}$. From this point, it keeps decelerating with $a_{min}$ until $v_e = v_l$. Finally, when $B_F = 1$, $B_C = 0$ 
$C_p(t) = C_p(1 - e^{-\lambda(t - t_0)})$, $C_q(t) = C_q(1 - e^{-\lambda(t - t_0)})$, $t \geq t_0$. Moreover, the reference speed $v_r$ is smoothly changed from the initial value to the speed of the lead vehicle $v_l$ (see \eqref{Eq: (Acceleration Control) reference speed when B_F = 1}), and the reference relative spacing is set to $hv_e + S_0$ (see \eqref{Eq: (Acceleration Control) Reference spacing when B_F = 1}), where $S_0 > 0$ is a constant standstill separation distance and $h$ is a safe time headway constant. This is a well-known safe vehicle following strategy where the following vehicles try to keep a safe constant time headway from the vehicle ahead \cite{Ioannou.Chien.1993}. It was shown that the value of the time headway constant can be chosen such that two consecutive vehicles do not collide under a worst-case stopping scenario as explained in \cite{Ioannou.Chien.1993}. Accordingly, the switching distance $\Delta_d$ is chosen to be equal to the safety distance $hv_e + S_0$ plus an additional non-negative term $r(v_e - v_l)$ if the ego vehicle is travelling at least as fast as the lead vehicle (see \eqref{Eq: Switching distance Delta_d}). The ego vehicle keeps operating in the vehicle following mode as long as the lead vehicle's speed is within its allowable speed limit $V_s$. 
\par
\begin{remark}
The objective is to design the control parameters such that \eqref{Eq: (Acceleration Control) Design of c_i} - \eqref{Eq: Switching distance Delta_d} ensures stability, string error attenuation, and, lastly, satisfies comfort. We should emphasise that the designed longitudinal controller is only responsible for smoothly adjusting the spacing and/or speed. Other operations such as emergency braking are assessed by a higher-level supervisory controller and operated by different control laws which are not addressed in this paper. However, this problem is resolved in other papers, see for example \cite{Ioannou.Xu.1994},\cite{raza1996vehicle}. 
\end{remark}
\par
We define $n_c = \frac{P}{hV_f + S_0 + L}$ as the \emph{critical number of vehicles} on the ring road ($n_c$ can be non-integer). We also define \emph{configuration} as the vector of relative spacings on the ring road.
\begin{theorem}\label{Theorem: (Acceleration Control) Robust veh following and speed tracking} There exist design parameters such that the following hold,
\begin{enumerate}[label=(\roman*)]
    \item The controller \eqref{Eq: (Acceleration Control) Design of c_i} - \eqref{Eq: Switching distance Delta_d} guarantees smooth vehicle following and/or speed tracking in all modes of operation and attenuation of the amplitude of errors with respect to the desired relative spacing, speed, and acceleration upstream a platoon.
    \item If  $n < n_c$, there is an infinite number of vehicle configurations on the ring road; however the equilibrium speed is $V_f$ in each of these configurations.
    \item If $n \geq n_c$, there is a unique vehicle configuration where all vehicles are symmetrically distributed around the ring road and their speed converges to an equilibrium speed of $\frac{1}{h}(\frac{P}{n} - S_0 - L) \leq V_f$.
\end{enumerate}
\end{theorem}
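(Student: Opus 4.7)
The plan is to establish (i) via closed-loop error analysis combined with a frequency-domain string stability argument, and to derive (ii) and (iii) from the equilibrium structure of the controllers subject to the ring constraint $\sum_{i=1}^n y_i = P - nL$.

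For (i) I would analyze each mode separately. In cruise mode, introducing the speed error $\tilde v_e = v_r - v_e$ and combining \eqref{Eq: Simplified Vehicle dynamics} with \eqref{Eq: (Acceleration Control) Design of c_i} produces a linear time-invariant system whose characteristic polynomial (cubic, due to the integral term) can be made Hurwitz by choosing $K_a < 0$ and $C_v, C_s > 0$ to satisfy Routh--Hurwitz conditions. The acceleration limiter \eqref{Eq: (Acceleration Control) Acceleration limiter}--\eqref{Eq: (Acceleration Control) Saturation function definition} clips $\dot v_r$ to $[a_{min}, a_{max}]$, which guarantees the comfort bound and smoothness of $v_r$; the Hurwitz property then yields $v_e \to V_s$. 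In vehicle following mode I would take $(\delta_e, \tilde v_e, a_e)$ with $\tilde v_e = v_r - v_e$ as the error state; substituting \eqref{Eq: (Acceleration Control) Design of c_i when B_F = 1} into \eqref{Eq: Simplified Vehicle dynamics} and using $\dot\delta_e = (v_l - v_e) - h a_e$ yields, once the smoothly-ramped $C_p(t), C_q(t)$ and the exponentially-blended $v_r$ of \eqref{Eq: (Acceleration Control) reference speed when B_F = 1} have settled, a third-order linear error system again rendered Hurwitz by design. For string error attenuation I would compute the transfer function $H(s)$ from the lead vehicle's error signal (in spacing, speed, or acceleration) to the ego vehicle's corresponding error and pick the gains so that $|H(j\omega)| \leq 1$ uniformly in $\omega$; this is achievable for constant time-headway policies. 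Smoothness across the cruise-to-following switch at $t_0$ follows because $C_p(t_0) = C_q(t_0) = 0$ and $v_r(t_0^-) = v_r(t_0^+)$, so $u_e$ remains continuous.

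For (ii) and (iii) I would characterize equilibria by setting all time derivatives to zero. An equilibrium in cruise mode forces $v_e = V_f$ and $y_e \geq \Delta_d = hV_f + S_0$ (from \eqref{Eq: Switching distance Delta_d} evaluated at zero relative speed), while an equilibrium in vehicle following mode forces $\delta_e = 0$, i.e., $y_e = hv_e + S_0$, and $v_e = v_l$. For (ii), when $n < n_c$ the slack $P - nL - n(hV_f + S_0) > 0$ permits an infinite family of spacing vectors $(y_1, \ldots, y_n)$ with each $y_i \geq hV_f + S_0$ summing to $P - nL$; every such configuration is realizable as an equilibrium (all vehicles in cruise, or any mixed cruise/following split), all at the common speed $V_f$.

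For (iii) I would argue by contradiction: assume $n \geq n_c$ and that at equilibrium some vehicle operates in cruise mode. That vehicle has speed $V_f$ and spacing $\geq hV_f + S_0$; its follower (in either mode) must then have speed $V_f$ and spacing $\geq hV_f + S_0$. Iterating around the ring gives $P - nL = \sum_i y_i \geq n(hV_f + S_0)$, i.e., $n \leq n_c$. For $n > n_c$ this contradicts the hypothesis, so every vehicle must operate in vehicle following mode; the couplings $v_{i-1} = v_i$ and $y_{i-1} = hv_{i-1} + S_0$ then force a common equilibrium speed $v^*$, and the ring constraint $n(hv^* + S_0 + L) = P$ yields the unique symmetric equilibrium with $v^* = \frac{1}{h}(\frac{P}{n} - S_0 - L) \leq V_f$; the boundary case $n = n_c$ collapses to this same configuration by equality. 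The main obstacle I anticipate is the string stability step in (i), which requires explicit algebraic manipulation of the third-order transfer function to identify a range of gains for which $|H(j\omega)| \leq 1$ holds uniformly in $\omega$ while the transient simultaneously respects the comfort bound $a_{min} \leq a_e \leq a_{max}$.
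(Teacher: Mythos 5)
Your overall strategy matches the paper's: mode-by-mode analysis reducing to an LTI system, a frequency-domain condition $|G(j\omega)|\le 1$ for string attenuation, and equilibrium counting against the ring constraint $\sum_i y_i = P - nL$ for parts (ii) and (iii). Your contradiction argument for (iii), including the explicit treatment of the boundary case $n = n_c$, is essentially the paper's Appendix A argument read in the other direction (the paper assumes all vehicles are in following mode when $n<n_c$ and derives $v_i = \frac{1}{h}(\frac{P}{n} - S_0 - L) > V_f$, violating the speed limit; you iterate outward from a cruising vehicle to get $P - nL \ge n(hV_f+S_0)$).

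Two steps as written would not go through. First, in cruise mode you claim the comfort bound follows because the limiter clips $\dot v_r$; it does not. The acceleration $a_e$ is the output of the tracking loop driven by $v_r$, and a rate-bounded reference can still produce acceleration overshoot. The paper closes this by writing $A_e(s) = sK(s)V_r(s)$ with $K(s) = (C_v s + C_s)/(s^3 - K_a s^2 + C_v s + C_s)$ and additionally requiring the impulse response to satisfy $k(t)\ge 0$, so that $\|k\|_1 = K(0) = 1$ and $|a_e(t)| \le \|k\|_1\,\sup_t|\dot v_r(t)| \le a_{max}$; the analogous condition $g(t)\ge 0$ is what bounds $|a_e|$ by $|a_l|$ plus a vanishing term in following mode. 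Second, your part (iii) establishes the \emph{unique equilibrium} but not \emph{convergence} to it: when $n\ge n_c$ every vehicle ends up in following mode, so the interconnection is a leaderless cycle, and exponential stability of that closed ring does not follow from Hurwitzness of the single-vehicle loop alone --- it requires the small-gain condition $|G(j\omega)|\le 1$ around the cycle. You do impose that inequality, but only in the service of string attenuation; you never invoke it where it is actually load-bearing for stability. (Two minor points: with the integral term the vehicle-following error dynamics are fourth order, not third; and the time-varying $C_p(t), C_q(t)$ and blended $v_r$ are handled in the paper as exponentially vanishing perturbations of the limiting LTI system rather than assumed to have ``settled.'')
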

\begin{proof}
Refer to Appendix \ref{Section: Appendix A}.
\end{proof}
%\par
%Assume that $v_{i+1} \rightarrow V_f$, i.e., $\bar{v}_{i+1} \rightarrow 0$, exponentially fast as $t \rightarrow \infty$. Then, the equilibrium of \eqref{Eq: Closed Loop dynamics of the follower vehicle} is $z^{*}_i = 0, i.e., \begin{pmatrix} y^{*}_i & v^{*}_{i} & a^{*}_{i} & J^{*}_{i} & s^{*}_{i} \end{pmatrix} = \begin{pmatrix} hV_f + S_0 & V_f & 0 & 0 & 0 \end{pmatrix}$. 
%Let $I^{*}_i$ denote the steady state value of $I_i$. 
%\begin{theorem}
%The configuration of the vehicles around the ring road depends on the initial condition as long as the density $\rho$ is below the critical density $\rho_c$. In such cases there are infinite number of configurations. On the other hand, when $\rho \geq \rho_c$, the configuration is unique. Additionally, as $\rho$ increases, the equilibrium speed of vehicles remain constant when $\rho < \rho_c$, and it decreases when $\rho \geq \rho_c$.
%\end{theorem}
%\begin{proof}
%\end{proof}
\begin{remark}
Equations \eqref{Eq: (Appx A) Cruise relation for acceleration in time}, \eqref{Eq: (Appx A) VehFol relation for acceleration in time} in the proof of Theorem \ref{Theorem: (Acceleration Control) Robust veh following and speed tracking} suggest that when in the cruise control mode, a vehicle satisfies the comfortable acceleration limits and when in the vehicle following mode, it accelerates/decelerates at most as high as the vehicle ahead except, maybe, for an exponentially vanishing term. We confirm via simulations that this guarantees the specified comfort requirements except, maybe, for an exponentially vanishing time.
\end{remark}
\begin{remark}
According to Theorem \ref{Theorem: (Acceleration Control) Robust veh following and speed tracking}, for a given number of vehicles $n < n_c$, vehicles can form platoons of (possibly) different sizes with different inter-platoon spacing at steady state, which depends on the initial condition. We discuss in Section \ref{Section: Coordination} the role of central coordination in achieving a unique desired configuration in order to improve efficiency in utilizing the limited space.
\end{remark}
\begin{remark}\label{Remark: FD} Macroscopic traffic flow interpretation of Theorem \ref{Theorem: (Acceleration Control) Robust veh following and speed tracking}:
Let $v^{*}$ be the equilibrium speed of vehicles, $\rho = \frac{n}{P}$ be the space-mean \emph{density}, $\rho_c = \frac{n_c}{P}$ be the \emph{critical density}, and $q^{*} = \rho v^{*}$ be the equilibrium space-mean \emph{flow}. It follows from Theorem \ref{Theorem: (Acceleration Control) Robust veh following and speed tracking} that $v^{*} = \min \{ V_f, \frac{1}{h}(\frac{P}{n} - S_0 - L)\}$. Therefore,
\begin{equation*}
    q^{*} = \begin{cases*}
    V_f\rho & \mbox{if } $\rho < \rho_c$ \\
    \frac{1}{h}(1 - \rho(S_0 + L))  & \mbox{if } $\rho \geq \rho_c$
    \end{cases*}
\end{equation*}
\par 
In other words, when the density is less than the critical density, the flow increases linearly with increasing density. However, when the density exceeds the critical density, the flow decreases linearly with increasing density. This gives rise to the well-known triangular fundamental diagram (see Figure \ref{Fig: Fundamental Diagram}). The maximum value of $q^{*}$, i.e., the \emph{capacity} $C$ of the ring road, is then found to be $C = \frac{V_f}{hV_f + S_0 + L}$. 
\end{remark}
\begin{figure}[t]
    \centering
    \includegraphics[width=0.35\textwidth]{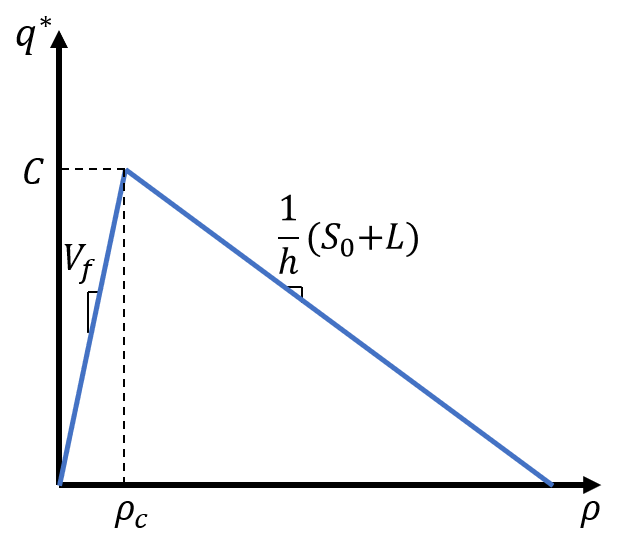}
    \caption{\sf Fundamental diagram without V2V communication}
    \label{Fig: Fundamental Diagram}
\end{figure}
\begin{remark}\label{Remark: (Acceleration Control) What design parameters should we choose}
According to the proof of Theorem \ref{Theorem: (Acceleration Control) Robust veh following and speed tracking}, for speed tracking in the cruise control mode the poles of $K(s)$ in \eqref{Eq: Tf K(s)} must lie in the open left half of the $s$-plane. This condition is satisfied if,
\begin{equation}\label{Eq: design param conditions to guarantee cruising}
        K_aC_v + C_s < 0
\end{equation}
\par
Moreover, for position/speed tracking and string error attenuation in the vehicle following mode, the design parameters must be chosen such that poles of $G(s)$ in \eqref{Eq: (Acceleration Control) G(s)} have negative real parts and $|G(j\omega)| \leq 1$, $\forall \omega \geq 0$. The former can be guaranteed by using pole placement. Additionally, $|G(j\omega)| \leq 1$, $\forall \omega \geq 0$ is satisfied if, 
\begin{equation}\label{Eq: (Acceleration Control) Conditions to gurantee stability on the ring road}
   \begin{aligned}
       & C_1 \geq 0 \\
       & C_2 - C^2_v \geq 0
   \end{aligned}
%%%%Jan22%%%%       & C_1 + p^2 \geq 0 \\
%%%%Jan22%%%%       & C_2 + p^2C_1 \geq 0 \\
%%%%Jan22%%%%       & C_3 + p^2(C_2 - C^2_v) - 2pC_pC_v\geq 0 \\
%%%%Jan22%%%%       & C_4 \geq 0
 \end{equation}
where,
\begin{equation*}
    \begin{aligned}
    C_1 &= K^2_a - 2(hC_{p} + C_{v}) \\
    C_2 &= (hC_p + C_v)^2 + 2C_q + 2K_a(C_p + hC_q + C_s) \\
%%%%Jan22%%%%    C_3 &= (hC_q + C_s)^2 + 2(C_pC_s - C_qC_v) \\
%%%%Jan22%%%%    C_4 &= ph^2C_q + 2C_s(ph-1) \\
    \end{aligned}
\end{equation*}
\par
We provide a set of parameters in Section \ref{Section: Simulation} that satisfies \eqref{Eq: design param conditions to guarantee cruising} and \eqref{Eq: (Acceleration Control) Conditions to gurantee stability on the ring road} (refer to \eqref{Eq: (Simulation)(Acceleration Control) Conditions to gurantee stability and string attenuation}) as well as the stability criterion for $G(s)$.
\end{remark}

\subsection{V2V Communication}\label{SubSection: (COMM) No Coordination, controller design}
We now assume that vehicles are able to communicate their braking capabilities as well as their instantaneous acceleration and deceleration to their immediate predecessor. This feature allows for accurate reference tracking when vehicles are outside the sensing range and also smaller safe time headway constant between vehicles \cite{milanes2013cooperative}. With V2V communication, the longitudinal control law in the vehicle following mode \eqref{Eq: (Acceleration Control) Design of c_i when B_F = 1} is modified as follows,
%\ksmargin{since the dynamics has changed, this point may not be obvious; better to provide a short explanation}
\begin{equation}\label{Eq: (COMM)(Acceleration Control) Design of c_i}
\begin{aligned}
    u_e &= K_a a_e + C_{p}(t)\delta_{e} + C_{v}(v_{r} - v_e) + C_a(t)(a_{l} - a_e) \\
    & + \int_{0}^{t}[C_{q}(\tau)\delta_{e} + C_{s}(v_{r} - v_e) + C_b(\tau)(a_{l} - a_e)]d\tau \\
\end{aligned}
\end{equation}
where $C_a(t), C_b(t) \geq 0$ are additional control parameters which behave similar to $C_p(t), C_q(t)$.  %Specifically, When vehicle $i$ is operating in the cruise control mode $C_a(t) \rightarrow 0$ and $C_b(t) \rightarrow 0$, whereas when it is operating in the vehicle following mode, $C_a(t) \rightarrow C_a$ and $C_b(t) \rightarrow C_b$, where $C_a, C_b > 0$ are design constants. Similar to the previous section, $C_a(t)$ and $C_b(t)$ are designed to be smooth functions of time, e.g., $C_a(t) = C_a(1 - e^{-\lambda(t - t_i)})$ and $C_b(t) = C_b(1 - e^{-\lambda(t - t_i)})$ upon switching from the cruise control to the vehicle following mode at some finite time $t_i > 0$.
Note that the only difference between \eqref{Eq: (Acceleration Control) Design of c_i when B_F = 1} and \eqref{Eq: (COMM)(Acceleration Control) Design of c_i} is the additional acceleration terms $C_a(t)(a_{l} - a_e)$ and $C_b(t)(a_{l} - a_e)$ in \eqref{Eq: (COMM)(Acceleration Control) Design of c_i}. Since by choosing $C_a(t) = C_b(t) = 0$, $\forall t \geq 0$, \eqref{Eq: (COMM)(Acceleration Control) Design of c_i} becomes identical to the control law in \eqref{Eq: (Acceleration Control) Design of c_i when B_F = 1} all of the results for stability, string error attenuation, and comfort holds when V2V communication is possible. In fact, V2V communication adds additional degrees of freedom in choosing the design constants in order to guarantee good tracking performance. \par
As mentioned earlier, V2V communication reduces the minimum safe time headway constant $h$. Thus, the critical number of vehicle $n_c$ for which vehicles can operate at the free flow speed increases. As a result, the critical density $\rho_{c} = \frac{1}{hV_f + S_0 + L}$ and the capacity $C = \frac{V_f}{hV_f + S_o + L}$ in Remark \ref{Remark: FD} are increased. Therefore, V2V communication expands the free-flow region of the Fundamental diagram, see Figure \ref{Fig: FD V2V vs. no V2V}. \par
Furthermore, using V2V communication, vehicles can be organized in platoons and decide among themselves certain configurations. Moreover, it allows for accurate tracking of the position, speed, and acceleration of vehicles ahead even when they are outside the sensing range. This feature expands the number of possible configurations that can be achieved in the presence of a coordinator. We discuss this in the next section.
\begin{figure}[t]
    \centering
    \includegraphics[width = 0.35\textwidth]{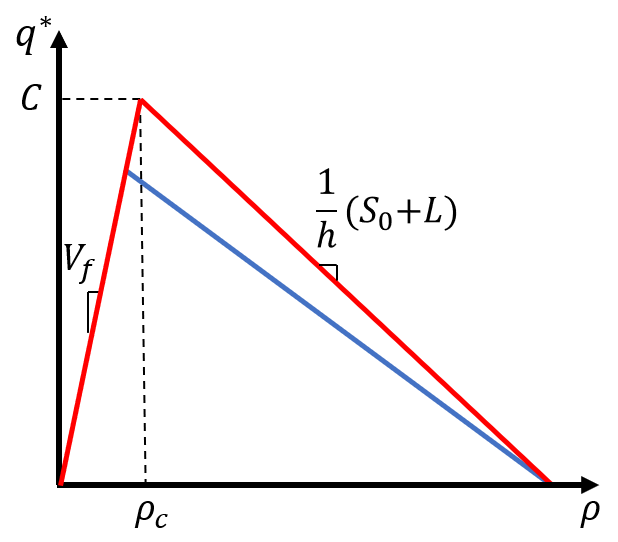}
    \caption{\sf Fundamental diagrams with V2V communication (red) and without V2V communication (blue)}
    \label{Fig: FD V2V vs. no V2V}
\end{figure}
%\end{proof}
\section{Coordination of Vehicles On A Ring Road}\label{Section: Coordination}
In the previous section, we assumed that vehicles travel without coordination, i.e., their action to achieve the speed limit or follow a vehicle in front was determined by their own sensors and/or V2V communication. According to Theorem \ref{Theorem: (Acceleration Control) Robust veh following and speed tracking}, if $n < n_c$, there is an infinite number of configurations in which the system of vehicles can occupy the limited space but, in all of them, they travel with the free flow speed $V_f$.
%On the other hand, when the density passes that level, i.e., $\rho \geq \rho_c$, vehicles will travel with a speed less than $V_f$ at steady state and the configuration is uniquely determined independent of the initial condition (we refer to this class of configuration as \emph{symmetrical}). 
It may so happen that some configurations on the road are more desirable than others from the point of view of a central coordinator. For example, vehicles may be organized in closed-space platoons in order to decrease air drag and thus fuel consumption, or use the bandwidth of the coordinator-to-vehicle communication system more effectively by only communicating to the leaders of platoons \cite{raza1996vehicle}.  \par
In this section, we assume that vehicles use their own sensors and/or V2V communication in combination with commands from a central coordinator, see Figure \ref{Fig: Perf req of the controller}. The coordinator chooses a configuration from the following set of desired configurations and communicates it along with the perimeter of the road $P$, and the number of vehicles $n$ to certain vehicles:
%%%%Jan23%%%%They then take appropriate control actions in order to achieve the coordinator's desired configuration.
\begin{enumerate}
    \item \textbf{$1$-platoon asymmetrical}: a single platoon of $n$ vehicles
    \item \textbf{Symmetrical}: all vehicles sharing the limited space equally
    \item \textbf{$m$-platoon symmetrical}:  $m$ platoons of vehicles, $1 < m \leq \frac{n}{2}$, sharing the limited space equally
\end{enumerate}
\begin{figure}[t]
    \centering
    \includegraphics[width = 0.55\textwidth]{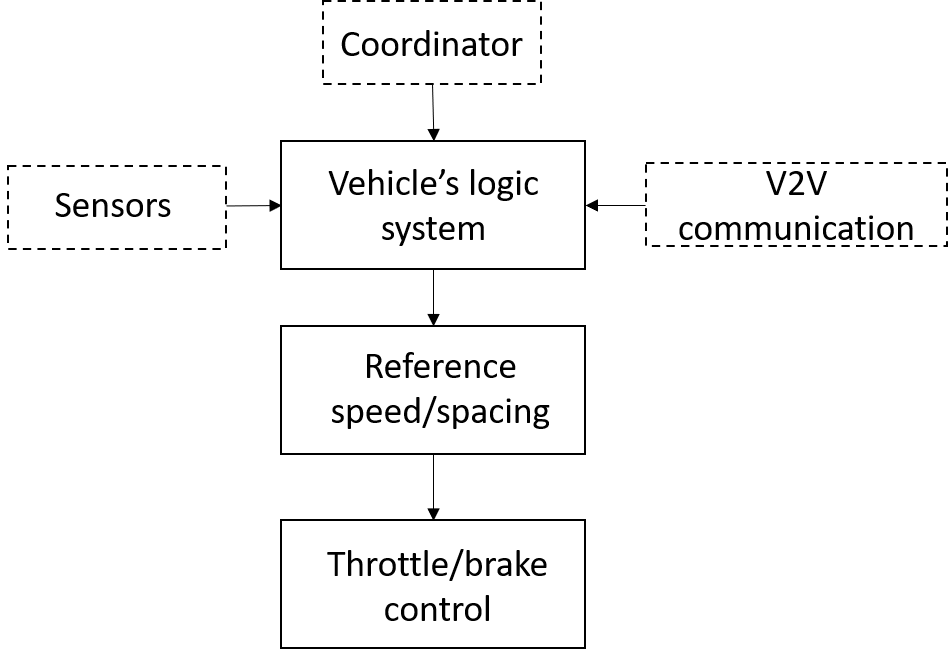}
    \vspace{0.2 cm}
    \caption{\sf Control structure in the presence of a coordinator and/or V2V communication}
    \label{Fig: Perf req of the controller}
\end{figure}
\par
These configurations are chosen in order to illustrate the idea. The following discussion can be easily applied to other desired configurations as well. Upon receiving commands from the coordinator, the ego vehicle calculates the appropriate reference speed and spacing according to the desired configuration and passes it to the longitudinal controllers $u_e$ in \eqref{Eq: (Acceleration Control) Design of c_i}, \eqref{Eq: (Acceleration Control) Design of c_i when B_F = 1} with the reference speed and spacing modified as follows,
%%%%Feb11%%%%\begin{equation}\label{Eq: (COOR) Design of u_i}
%%%%Feb11%%%%    u_e = K_a a_e + C_{p}(t)\delta_e + C_{v}(v_{r} - v_e) 
%%%%Feb11%%%%    + \int_{0}^{t}[C_{q}(\tau)\delta_{e} + C_{s}(v_{r} - v_e)]d\tau 
%%%%Feb11%%%%\end{equation}
%%%%Feb11%%%%where,
\begin{align}
%%%%Feb13%%%%    \dot{v}_{r} &= \text{sat}[p(V_s - v_{r})], \:\: \text{if} \:\: B_F = 0 \nonumber\\
    V_s &= \begin{cases*}
    V_f & \mbox{if } desired platoon formed behind \\
    \alpha V_f & \mbox{otherwise}
    \end{cases*} \label{Eq: Effect of logic B_V on ref speed}\\
%%%%Feb13%%%%    v_r &= v_l + (v_r(0) - v_l)e^{-\lambda t}, \:\: \text{if} \:\: B_F = 1 \nonumber \\
    \delta_e &= y_e - y_d \nonumber \\
    y_d &= \begin{cases*}
    (h_d + (h - h_d)e^{-\lambda t})v_e + S_0 & \mbox{if } $\begin{aligned}
        & \text{desired configuration requires} \\
        & \text{spacing adjustment}
    \end{aligned}$  \\
    hv_e + S_0 &  \mbox{otherwise }
    \end{cases*} \label{Eq: Effect of logic B_D on ref spacing}
\end{align}
\par
Upon receiving initiation commands from the coordinator at $t = 0$, the ego vehicle uses the platoon formation flow chart in Figure \ref{Fig: Speed selection logic} in order to calculate the reference speed and spacing. According to Figure \ref{Fig: Speed selection logic}, when the platoon for which the ego vehicle is its desired leader has not yet formed, the ego vehicle changes its speed limit $V_s$ from $V_f$ to $\alpha V_f$ for some $\alpha \in (0,1)$ (see \eqref{Eq: Effect of logic B_V on ref speed}). According to the switching logic explained in Section \ref{Subsection: Mode of Operation Selection}, it starts to decelerate, when it is safe, so that the vehicles behind catch up.
%%%%Feb6%%%%Note that if the ego vehicle finds out that the distance with respect to the lead vehicle is unsafe and/or the lead vehicle is braking, it sets $B_F = 1$ according to \eqref{Eq: The logic signal B_F} and follows the lead vehicle instead.
When the desired platoon has formed, the ego vehicle is notified by the coordinator and/or V2V communication and the speed limit is reset to $V_f$ (see \eqref{Eq: Effect of logic B_V on ref speed}). Moreover, if the ego vehicle needs to adjust its relative spacing depending on the desired configuration, the controller smoothly tracks $v_l$ and changes the reference relative spacing from the initial value to $h_dv_e + S_0$ (see \eqref{Eq: Effect of logic B_D on ref spacing}). The reference time headway constant $h_d$ is calculated by the ego vehicle such that $h_dV_f + S_0$ is equal to the reference relative spacing, e.g., when the desired configuration is symmetrical, the reference relative spacing is $\frac{P}{n} - L$, and $h_d = \frac{1}{V_f}(\frac{P}{n} - L - S_0)$. Since the reference time headway constant $h_d$ is different than $h$, the design parameters $C_p(t), C_q(t)$ in \eqref{Eq: (Acceleration Control) Design of c_i} are also smoothly changed, if necessary, such that the controller maintains good tracking performance. \par 
As an example, consider the $1$-platoon asymmetrical desired configuration and let the ego vehicle be its desired leader. Then, if the desired platoon has not yet formed, the ego vehicle switches to the cruise control mode, when it is safe, in order to track the reference speed of $\alpha V_f$, $\alpha \in (0,1)$, until all other vehicles catch up and switch to the vehicle following mode. At this point, it starts tracking the reference free flow speed $V_f$ and the transition is completed.
\begin{figure}[t]
    \centering
    \includegraphics[width = 0.65\textwidth]{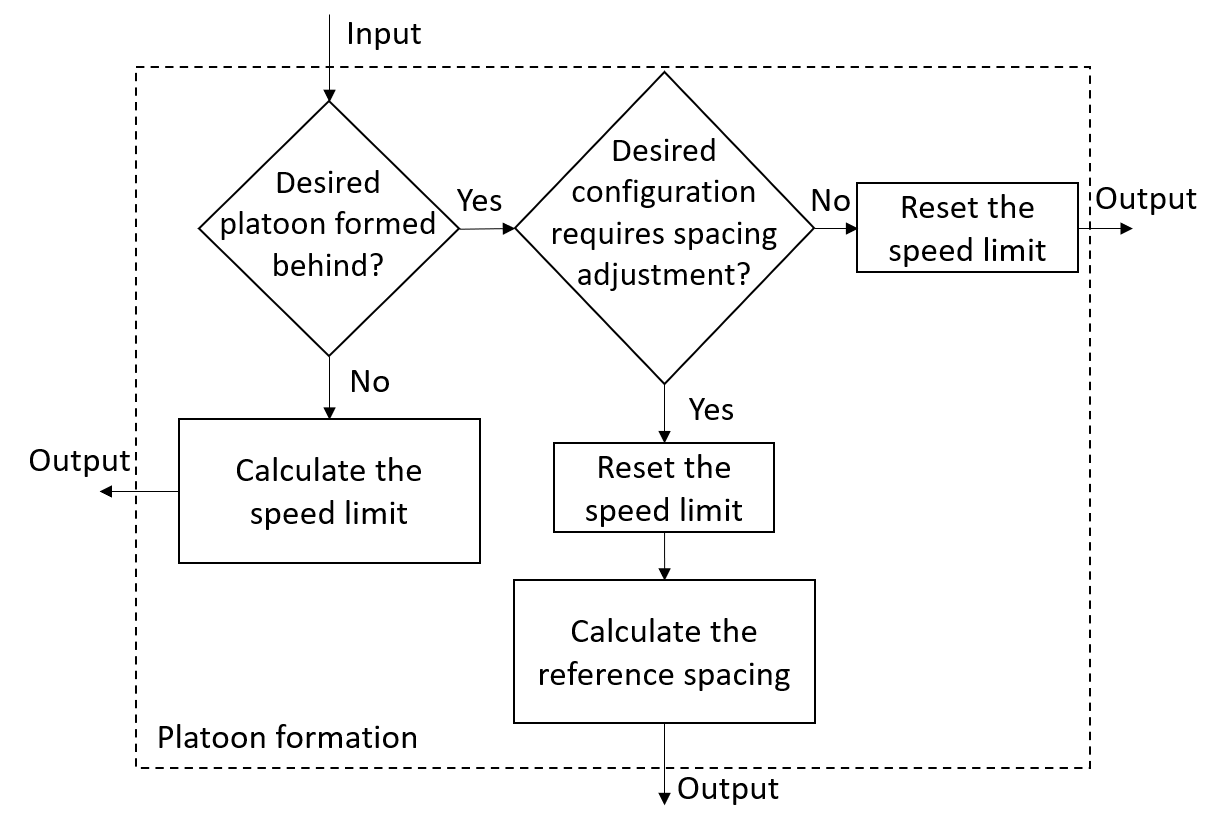}
    \vspace{0.2 cm}
    \caption{\sf  Logic diagram of a desired leader for creating the desired platoon}
    \label{Fig: Speed selection logic}
\end{figure}
Note that in Section \ref{Section: No Coordination}, vehicles used homogeneous time headway constants $h$, and constant desired speed $V_f$ in the cruise control mode. However, with the coordinator in the loop the desired time headways are (possibly) heterogeneous and time-varying, and the reference speed in the cruise control mode is, in general, piece-wise constant. Therefore, additional analysis is required in order to establish stability. 
\begin{theorem}\label{Theorem: (COOR) convergence to the desired configuration}
There exist design parameters such that the longitudinal controller \eqref{Eq: (Acceleration Control) Design of c_i}-\eqref{Eq: Switching distance Delta_d} with the reference speed/spacing specified in \eqref{Eq: Effect of logic B_V on ref speed}, \eqref{Eq: Effect of logic B_D on ref spacing}, guarantees that the system of vehicles converges to the configuration specified by the coordinator.
%%%%Feb2%%%%, i.e., for any of the desired configurations in the set and $i \in \mathcal{N}$, $\delta_i \rightarrow 0$, $v_i \rightarrow V_f$, $a_i \rightarrow 0$ as $t \rightarrow \infty$.
\end{theorem}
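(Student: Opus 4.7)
The plan is to leverage Theorem 1 as a building block by decomposing the coordinated dynamics into two stages: a finite-time platoon-formation stage and a post-formation stage in which the analysis reduces to a collection of vehicle-following problems with piecewise-constant references. The key observation is that each desired configuration (1-platoon asymmetrical, symmetrical, or $m$-platoon symmetrical), together with $P$ and $n$ broadcast by the coordinator, determines for every vehicle a unique target headway constant $h_d$ such that $h_d V_f + S_0 + L$ equals the prescribed steady-state spacing. Hence the proof reduces to showing (a) that every follower reaches its desired predecessor's sensing range and triggers the switch to vehicle-following in finite time, and (b) that once platoons are assembled, each vehicle tracks the modified references $V_s = V_f$ and $y_d = h_d v_e + S_0$ exponentially.

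For stage (a), the platoon-formation phase, I would argue as follows. Each desired leader sets $V_s = \alpha V_f$ with $\alpha \in (0,1)$, and by the cruise-control portion of Theorem 1 its speed converges exponentially to $\alpha V_f$. Any follower assigned to that leader is either (i) still cruising at $V_f > \alpha V_f$, in which case its distance to the leader closes at a rate bounded below by $(1-\alpha)V_f$ minus an exponentially decaying transient, or (ii) already following some predecessor that is itself slowing down, in which case the same argument applies upstream. The ring constraint $\sum_i y_i = P - nL$ then forces any gap exceeding the switching threshold $\Delta_d$ to shrink until the threshold is crossed. Combined with the safety guarantee of Theorem 1 and the explicit form of $\Delta_d$ in \eqref{Eq: Switching distance Delta_d}, this argument yields finite-time completion and rules out mode chattering.

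For stage (b), once the coordinator or V2V link signals that the platoon is complete, each desired leader resets $V_s$ to $V_f$ and each follower smoothly retargets its headway from $h$ to $h_d$ via \eqref{Eq: Effect of logic B_D on ref spacing}, with the gains $C_p(t), C_q(t)$ rescheduled so that the frozen-time loop transfer $G(s)$ of Remark 4 satisfies the conditions of Theorem 1 throughout the transition. Because the headway schedule and gain schedule are exponentially fast and the frozen-time closed loops are uniformly exponentially stable, a standard slowly-varying-parameter argument (or, equivalently, an input-to-state stability argument treating $\dot h_d$ as an exponentially vanishing disturbance) gives exponential convergence of $\delta_e \to 0$ and $v_e \to v_l$. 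Since $n < n_c$ (the regime in which coordination is of interest by Theorem 1), the ring constraint together with the prescribed spacings admits the coordinator-specified configuration as the unique equilibrium at common speed $V_f$.

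The main obstacle is stage (a): certifying finite-time platoon formation under the active switching logic of Figure 2 while preserving safety and avoiding Zeno behavior. The difficulty is that during this phase several vehicles may switch modes simultaneously, the gains $C_p(t), C_q(t)$ are still in their activation transients, and the followers' target reference speeds are themselves in flux. I would address this by identifying a monotone Lyapunov-like quantity -- for instance, the maximum non-platoon gap on the ring -- and showing it decreases at a uniformly positive rate whenever it exceeds $\Delta_d$, and by invoking the safety guarantee to ensure that once a switch occurs it is not reversed. After this step the remaining arguments are routine extensions of Theorem 1 to time-varying but exponentially settling references.
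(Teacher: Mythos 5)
Your two-stage decomposition --- finite-time platoon formation followed by exponential convergence under retargeted headways --- is exactly the structure of the paper's proof, and your overall plan is sound. Two points of comparison are worth making. First, for stage (a) the paper does not need your quantitative gap-closing/Lyapunov argument: it simply observes from the equilibrium equations \eqref{Eq: Equilibrium analysis when I_i = 0}, \eqref{Eq: Equilibrium analysis when I_i = 1} that any vehicle remaining in cruise control would converge to $V_f$ while the equilibrium forced by a leader tracking $\alpha V_f < V_f$ requires all speeds to equal $\alpha V_f$, a contradiction; hence every follower must switch to vehicle following after finite time, and since the total number of reference switches is finite, chattering is excluded for free. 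Your rate-based argument is a more constructive route to the same conclusion, but it is not where the real work lies. Second, the place where you are vaguest is precisely where the paper does its substantive work: after the headway is retargeted from $h$ to $h_d$, the relevant transfer function is no longer $G(s)$ but $H(s)$ with $h_d$ in place of $h$, and in the symmetrical and $m$-platoon cases the steady-state interconnection is a \emph{leaderless ring}, so exponential stability requires the small-gain condition $|H(j\omega)|\leq 1$ (respectively $|H_m(j\omega)G(j\omega)|\leq 1$) around the closed loop, not merely stability of each frozen-time link. Saying the gains are ``rescheduled so that the conditions of Theorem 1 hold throughout the transition'' asserts exactly what must be proved; the paper discharges this by the explicit choice $h_d\tilde{C}_p = hC_p$, $h_d\tilde{C}_q = hC_q$ together with $K_aC_p + C_q < 0$, which makes $H(s)$ inherit the stability and unit-gain bound of $G(s)$, and by noting that the composite loop gain in the $m$-platoon case factors into terms each bounded by one. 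To complete your proposal you would need to supply this verification (or an equivalent one); the remaining ISS/vanishing-perturbation step for the exponentially settling headway and gain schedules matches the paper's appeal to the standard result on exponentially decaying perturbations of exponentially stable LTI systems.
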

\begin{proof}
Refer to Appendix \ref{Section: Appendix B}.
\end{proof}
\begin{remark}
Consider the $m$-platoon symmetrical desired configuration, $1 < m \leq \frac{n}{2}$. Then, the equilibrium distance between adjacent platoons is $d = \frac{n}{m}(\frac{P}{n} - L - hV_f - S_0) + hV_f + S_0$. Since $P > n(L + hV_f + S_0)$, i.e., $n < n_c$, $d$ decreases when $m$ is increased from $2$ to $\frac{n}{2}$. In other words, as the size of the platoon increases in the $m$-platoon symmetrical configuration, the desired inter-platoon relative spacing increases. Since the sensing range of vehicles are limited, V2V communication allows accurate tracking of the position of the vehicle ahead even when it is outside of the sensing range. In other words, V2V communication expands the number of achievable desired configurations when there is central coordination.
\end{remark}
\begin{figure}[htb!]
\captionsetup[subfigure]{labelformat=empty}
    \begin{subfigure}{1\textwidth}
        \centering
    \includegraphics[width=0.50\textwidth]{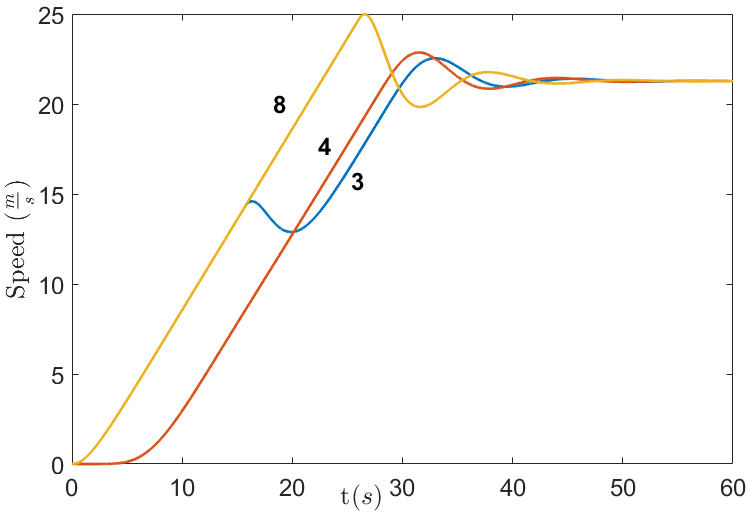}
  \caption{}
    \end{subfigure}
\vspace{-0.2cm}
    \begin{subfigure}{1\textwidth}
        \centering
    \includegraphics[width=0.50\textwidth]{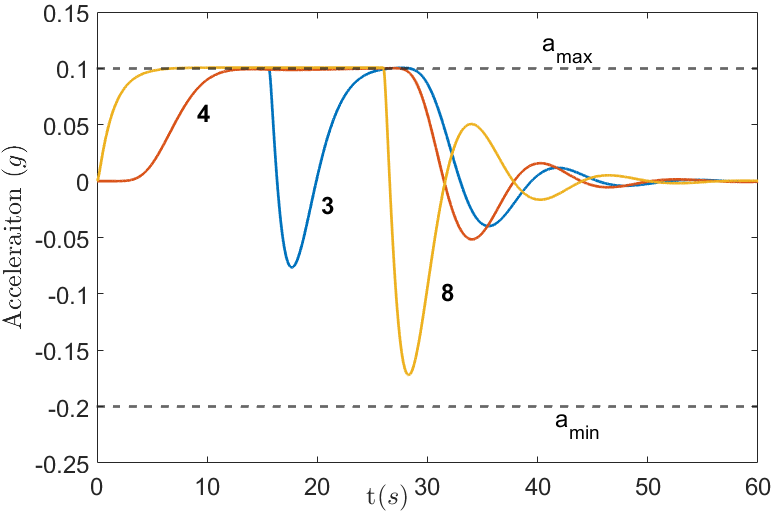}
  \caption{}
    \end{subfigure}
  \vspace{-0.2cm}
      \begin{subfigure}{1\textwidth}
        \centering
    \includegraphics[width=0.50\textwidth]{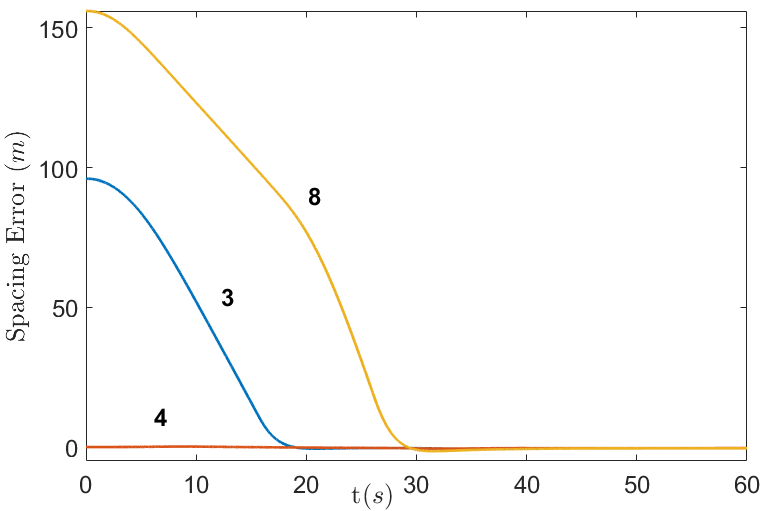}
  \caption{}
    \end{subfigure}
  \caption{\sf Simulation results for the high density traffic regime}\label{Fig: HighDensTraff}
\end{figure}
%%%%%%%%%%%%%%%%%%%%%%%%%%%%%%%%%%%%%%%%%%%%%%%
\begin{figure}[htb!]
\captionsetup[subfigure]{labelformat=empty}
    \begin{subfigure}{1\textwidth}
        \centering
    \includegraphics[width=0.50\textwidth]{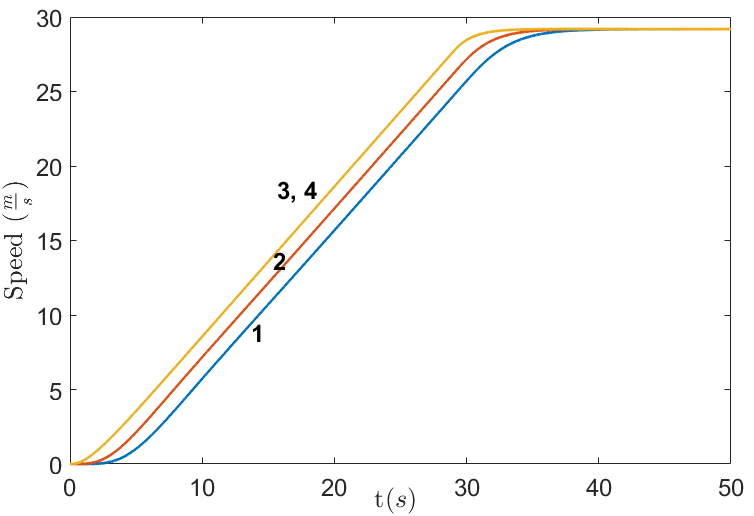}
  \caption{}
    \end{subfigure}
\vspace{-0.2cm}
    \begin{subfigure}{1\textwidth}
        \centering
    \includegraphics[width=0.50\textwidth]{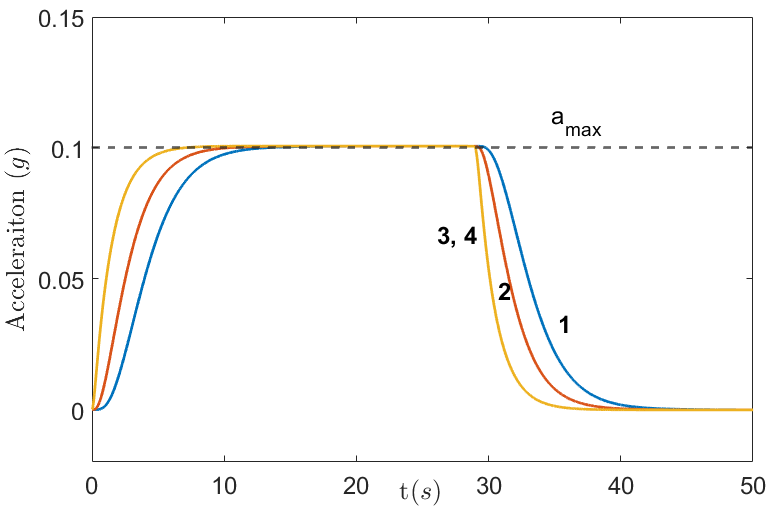}
  \caption{}
    \end{subfigure}
  \vspace{-0.2cm}
      \begin{subfigure}{1\textwidth}
        \centering
    \includegraphics[width=0.50\textwidth]{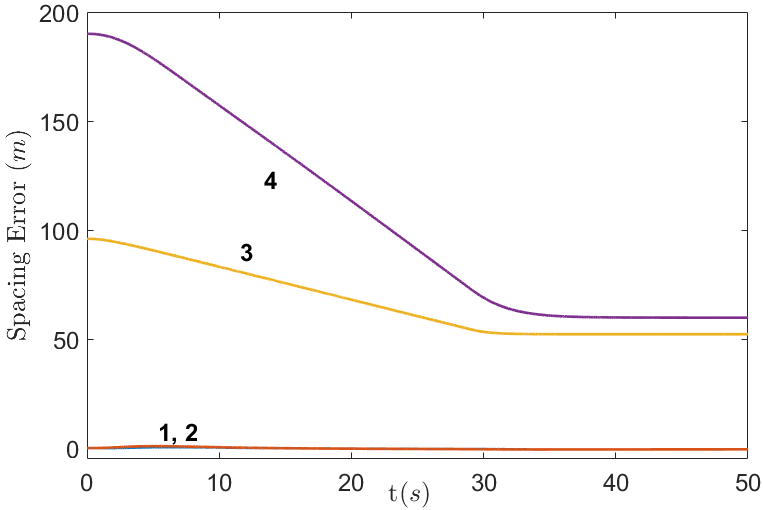}
  \caption{}
    \end{subfigure}
  \caption{\sf Simulation results for the low density traffic regime with no coordination}\label{Fig: LowDensTraff}
\end{figure}
%%%%%%%%%%%%%%%%%%%%%%%%%%%%%%%%%%%%%%%%%%%%%
\begin{figure}[htb!]
\captionsetup[subfigure]{labelformat=empty}
    \begin{subfigure}{1\textwidth}
        \centering
    \includegraphics[width=0.50\textwidth]{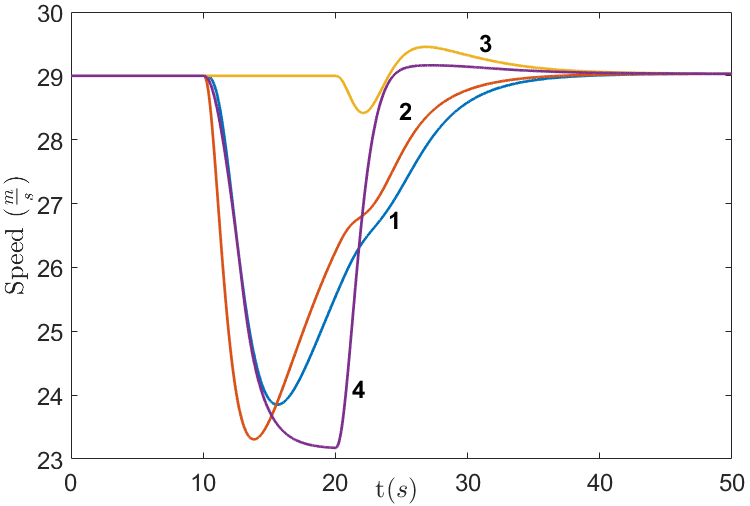}
  \caption{}
    \end{subfigure}
\vspace{-0.2cm}
    \begin{subfigure}{1\textwidth}
        \centering
    \includegraphics[width=0.50\textwidth]{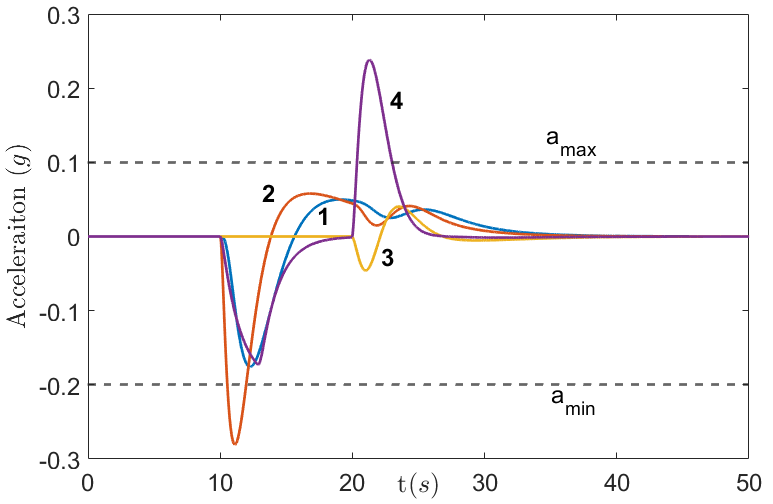}
  \caption{}
    \end{subfigure}
  \vspace{-0.2cm}
      \begin{subfigure}{1\textwidth}
        \centering
    \includegraphics[width=0.50\textwidth]{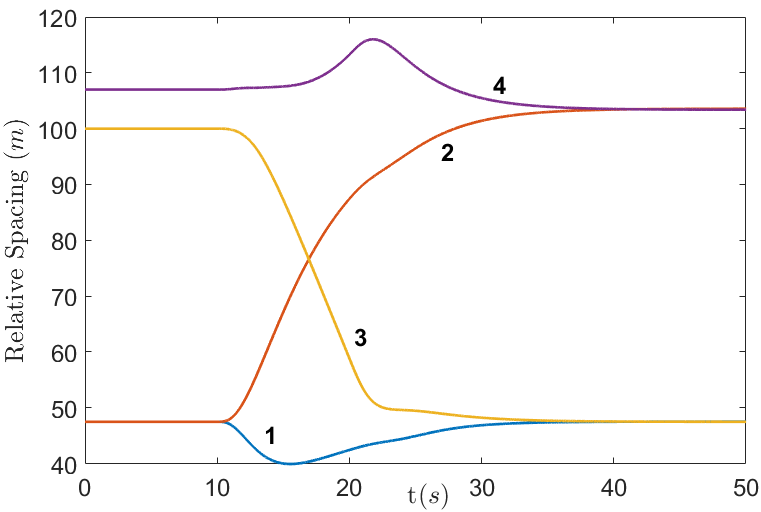}
  \caption{}
    \end{subfigure}
  \caption{\sf Simulation results with coordination and $2$-platoon symmetrical desired configuration}\label{Fig: LowDensTraff_COOR}
\end{figure}

\section{Simulation Results}\label{Section: Simulation}
In this section, we illustrate the performance of the designed control laws by simulating a few scenarios. In all scenarios, the control parameters are chosen as follows,
\begin{equation}\label{Eq: (Simulation)(Acceleration Control) Conditions to gurantee stability and string attenuation}
\begin{aligned}
    K&_a = -9, ~~ C_{p} = 2, ~~ C_{v} = 6, ~~ C_{q} = 0.01, ~~ C_{s} = 0.03 \\
    h &= 1.5 ~[s], ~~ S_0 = 4~ [m], ~~ p = 10, ~~ a_{min} = -0.2g \\
    a&_{max} = 0.1g, ~~ r = 1, ~~ \lambda = 0.5
\end{aligned}
\end{equation}
\par
For this choice of design constants, it can be checked that stability, string error attenuation, and comfort conditions are satisfied. Other parameters are chosen to be as follows,
\begin{equation*}
        P = 320 ~[m], ~~ L = 4.5 ~[m], ~~ V_f = 29 ~ [\frac{m}{s}]
\end{equation*}
\par
Therefore, the critical number of vehicles is $n_{c} = \frac{320}{1.5\times29+4+4.5} = 6.04$. 
%%%%Feb9%%%%First, we consider a low-density traffic regime, i.e., $\rho < \rho_c$, where there is no coordination and V2V communication. It is shown that all vehicles reach the free flow speed $V_f$ at steady state while satisfying the acceleration bounds during the transient. However, they are scattered on the road. We then consider coordination in this case. We use the proposed logic and controller in Section \ref{Section: Coordination} to connect the scattered vehicles and create a $1$-platoon asymmetrical configuration. We then simulate a high-density traffic regime, i.e., $\rho \geq \rho_c$. We verify that the speed of each vehicle does not converge to $V_f$. Instead, it converges to a speed less than $V_f$ which depends on the system parameters (road length $P$, the number of vehicles $n$, time headway constant $h$, minimum safety constant $S_0$, and length of vehicles $L$). It is also shown that all vehicles share the bounded space equally at steady state. 
%In the last scenario, we consider the role of coordination in the high-density traffic regime. We use the strategy for splitting up platoons in order to create safe spacing for new vehicles to join in. We also discuss how coordination can enhance the utilization of space.  
\par
%We also simulate the effect of constant disturbance on a bounded space when the integral action is not used in the controller. 
\begin{figure}[t]
    \begin{subfigure}{.5\textwidth}
        \centering
        \includegraphics[width=0.7\linewidth]{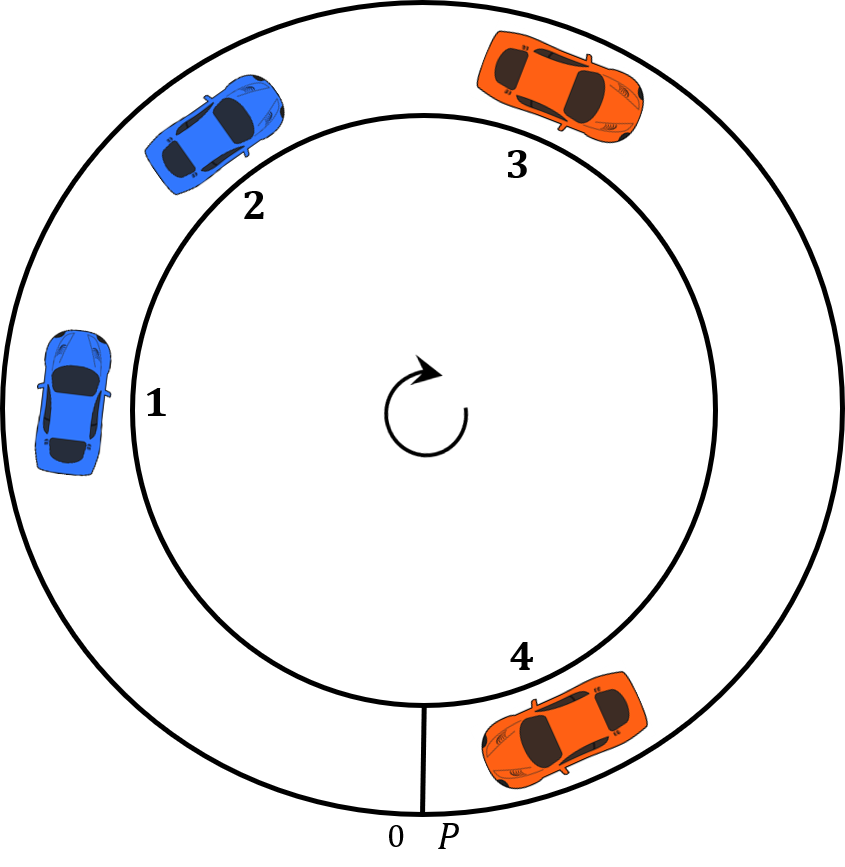}
        \caption{} \label{Fig: config init}
    \end{subfigure}
    \vspace{0.1cm}
    \begin{subfigure}{.5\textwidth}
        \centering
        \includegraphics[width=0.7\linewidth]{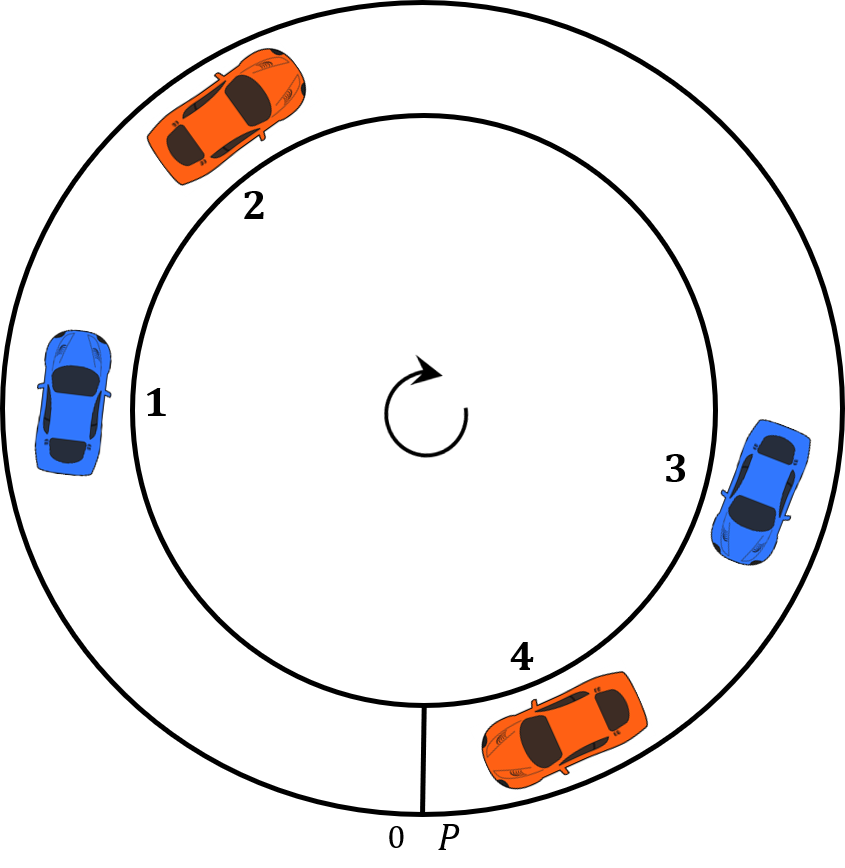}
        \caption{} \label{Fig: config cnst time hdwy}
    \end{subfigure}
        \vspace{0.1cm}
%%%%    \begin{subfigure}{.33\textwidth}
%%%%        \centering
%%%%        \includegraphics[width=0.9\linewidth]{}
%%%%        \caption{} \label{Fig: config cnst spacing}
%%%%    \end{subfigure}
    \caption{\sf Steady state configuration of vehicles when there is (a) no coordination, (b) coordination with $2$-platoon symmetrical desired configuration, where vehicles in the cruise control mode are colored in orange and the ones in the vehicle following are colored blue}\label{Fig: Qualitative configurations}
\end{figure}
\par
\subsection{High Density Traffic Regime}
Let $n = 8 > n_ c$, i.e., a high-density traffic regime, with two platoons of sizes $3$ and $5$ initially at rest. The first platoon consists of vehicles $1 - 3$ with vehicle $3$ as the leader, and the second platoon consists of vehicles $4 - 8$ with vehicle $8$ as the leader. The distance between the first and second platoons is initially $100$ meters, i.e., $y_3(0) = 100~[m]$, and all the following vehicles are assumed to be at the desired spacing at $t = 0$. According to Theorem \ref{Theorem: (Acceleration Control) Robust veh following and speed tracking}, the system of vehicles converges to a unique configuration with the equilibrium relative spacing of $\frac{P}{n} - L = 35.5~[m]$, and speed of $\frac{1}{h}(\frac{P}{n}-S_0 - L) = 21~ [\frac{m}{s}]$. The speed, acceleration, and spacing error profiles for sample vehicles are shown in Figures \ref{Fig: HighDensTraff}. As can be seen from the acceleration and speed profiles, the leader of the first platoon, i.e., vehicle $3$, operates in the cruise control mode until $t \approx 15 \:\: [s]$. At this point it switches to the vehicle following mode and the two platoons become connected. Furthermore, the leader of the second platoon switches to the vehicle following mode at $t \approx 26 \:\: [s]$, and a platoon with no leader is formed. It is clear that the speed and acceleration profiles are smooth and within the comfort range in both modes of operation as well as during the transition between the two modes.
\subsection{Low Density Traffic Regime: No Coordination}
In this scenario, let $n = 4 < n_{c}$, i.e., a low-density traffic regime, and all vehicles are initially at rest. We assume that vehicles $1 - 3$ are initially in platoon formation with the following vehicles at desired spacing, and vehicle $4$ is $100$ meters ahead. The speed, acceleration, and spacing error profiles of the vehicles are shown in Figure \ref{Fig: LowDensTraff}. It can be seen from the spacing error profiles that vehicles $3$ and $4$ operate in the cruise control mode at all times, thus at steady state, we have one platoon of three vehicles with vehicle $3$ as the leader, and a single vehicle, i.e., vehicle $4$, operating in the cruise control mode, see Figure \ref{Fig: config init}. Moreover, it can be easily verified that all vehicles reach the free flow speed $V_f$ while satisfying the the acceleration bounds during the transient. Also, it can be seen from the acceleration profiles that the errors in acceleration are not magnified upstream a platoon, i.e., string error attenuation. 
\subsection{Low Density Traffic Regime: With Coordination}
We again consider $n = 4$. We assume that vehicles are travelling at steady state speed of $V_f$ and initial configuration of the previous scenario. Let the coordinator's desired configuration be $2$-platoon symmetrical with vehicles $2$ and $4$ as the desired leaders. The simulation results for this scenario are shown in Figure \ref{Fig: LowDensTraff_COOR}. The coordinator communicates the desired configuration to vehicles $2$ and $4$ at $t = 10\:\:[s]$. Since the platoon consisting of vehicles $3$ and $4$ has not formed at $t = 10\:\:[s]$, vehicle $4$ sets its speed limit to $V_s = 0.8V_f$, and starts to decelerate until vehicle $3$ catches up. At the same time vehicle $2$ smoothly increases its time headway constant to the desired value $h_d \approx 3.43\:\:[s]$, and starts decelerating in order to adjust its relative spacing. At $t \approx 20 \:\:[s]$, vehicle $3$ switches to the vehicle following mode, and vehicle $4$ resets its speed limit while smoothly increasing its desired relative spacing to $h_d v_1 + S_0$. Due to large initial positive relative spacing and speed error at $t \approx 20 \:\:[s]$, this introduces acceleration outside the comfortable bounds for vehicle $4$. It can be seen from the relative spacing profiles at $t \approx 35 \:\:[s]$, that the desired configuration is achieved asymptotically. The qualitative steady state configuration of this scenario is shown in Figure \ref{Fig: config cnst time hdwy}.
\section{Conclusion} \label{Section: Future Work}
In this paper we considered the design of vehicle longitudinal controllers for homogeneous vehicles following a single lane in a closed ring road under safety and comfort constraints in order to evaluate the impact of limited space on the speed of flow. We showed that if the number of vehicles is less than a certain critical number $n_c$, which depends on the perimeter of the ring road, free flow speed limit, and safety spacing, the vehicles can organize themselves around the ring road in an infinite number of different configurations. When the number of vehicles increases to be greater than or equal to $n_c$, all vehicles converge to a unique equilibrium configuration where the equilibrium speed decreases as the number of vehicles increases. When we add vehicle to vehicle communications, the controller is modified for faster action during vehicle following and safety can be guaranteed under lower intervehicle spacing. As a result, the critical number of vehicles $n_c$ that can operate at the maximum allowable speed increases. We also show that if a central coordinator dictates the configuration of the vehicles around the ring road, the proposed controllers can force the vehicles to converge to the desired coordination while respecting the safety and passenger comfort constraints. Computer simulations are used to demonstrate the performance of the controllers.
\appendices
\section{Proof of Theorem \ref{Theorem: (Acceleration Control) Robust veh following and speed tracking}}\label{Section: Appendix A}
\begin{enumerate}[label=(\roman*)]
    \item Let $w_e = \int_{0}^{t}[C_{q}(\tau)\delta_{e} + C_{s}(v_{r} - v_e)]d\tau$, where $C_q(t) = 0$ when the ego vehicle is in the cruise control mode. We consider the following three cases. First, let the ego vehicle be operating in the cruise control mode. The closed loop dynamics of the ego vehicle can be written as follows,
\begin{equation}\label{Eq: (Appx A) closed loop dynamics in cruise}
    \begin{aligned}
%%%%Feb9%%%%        \dot{y}_{e} &= v_{l} - v_{e} \\
        \dot{v}_{e} &= a_e \\
        \dot{a}_e &= K_a a_e + C_{v}(v_{r} - v_e) + w_e \\
        \dot{w}_{e} &= C_{s}(v_{r} - v_{e}) \\
        \dot{v}_{r} &= \text{sat}[p(V_f - v_{r})]
    \end{aligned}
\end{equation}
    In an equilibrium point of \eqref{Eq: (Appx A) closed loop dynamics in cruise} we have, 
    \begin{equation}\label{Eq: Equilibrium analysis when I_i = 0}
    \begin{aligned}
          & a_{e} = 0 \\
          & C_{v}(v_{r} - v_{e}) + w_e = 0 \\
          & C_s(v_{r} - v_{e}) = 0 \\
          & v_{r} = V_f
    \end{aligned}
\end{equation}
and $(v_{e}, a_{e}, w_{e}, v_r) = (V_f, 0, 0, V_f)$ is the unique equilibrium of this mode. Without loss of generality, suppose that $p(V_f - v_{r}(0)) > a_{max}$, i.e., the saturation function is initially active. It follows that,
\begin{equation}\label{Eq: (Appx A) Time trajectory of V_r in cruise}
    v_{r}(t) =  \begin{cases*}
        a_{max}t & \mbox{if } $0 \leq t \leq T$ \\
        -\frac{a_{max}}{p}e^{-p(t - T)} + V_f  & \mbox{if } $T \leq t$
     \end{cases*}
\end{equation}
where $T = \frac{V_f}{a_{max}} - \frac{1}{p}$. Hence, $v_r \rightarrow V_f$ exponentially fast as $t \rightarrow \infty$. As a result, the state of the saturation function converges to its linear region and \eqref{Eq: (Appx A) closed loop dynamics in cruise} becomes a LTI system after a finite time. By shifting the equilibrium point of \eqref{Eq: (Appx A) closed loop dynamics in cruise} to zero and taking Laplace transform of the first three equations assuming zero initial condition, it follows that,
\begin{equation}\label{Eq: (Appx A) Cruise laplace relation between speeds}
    V_e(s) = K(s) V_r(s)
\end{equation}
where $V_e(s), V_r(s)$ are Laplace transforms of $v_e, v_r$, respectively, and,
\begin{equation}\label{Eq: Tf K(s)}
    K(s) = \frac{C_vs + C_s}{s^3 - K_a s^2 + C_v s + C_s}
\end{equation}
\par
For stability of the equilibrium, we require that poles of $K(s)$ lie in the open left half of the complex plane. For analyzing the performance in achieving comfort, note that from \eqref{Eq: (Appx A) Cruise laplace relation between speeds} we obtain,
\begin{equation}\label{Eq: (Appx A) Cruise laplace relation for acceleration}
    A_e(s) = sK(s)V_r(s)
\end{equation}
where $A_e(s)$ is the Laplace transform of $a_e$. Therefore, if we choose the design constants such that $|K(j\omega)| \leq 1$, $\forall \omega \geq 0$ and $k(t) \geq 0$, $\forall t \geq 0$, then $||k(t)||_{1} \leq 1$. Thus, assuming zero initial condition, it follows from \eqref{Eq: (Appx A) Time trajectory of V_r in cruise} and \eqref{Eq: (Appx A) Cruise laplace relation for acceleration} that for all $t \geq 0$,
\begin{equation}\label{Eq: (Appx A) Cruise relation for acceleration in time}
    |a_e(t)| \leq ||k(t)||_1 \sup_{t \geq 0}|\dot{v}_{r}(t)| \leq a_{max}
\end{equation}
\par
Similarly, if the saturation function is initially active in the other direction, i.e., $p(V_f - v_{r}(0)) < a_{min}$, then $a_e(t) \geq a_{min}$. %%%%Mar10%%%%Therefore, the ego vehicle does not create accelerations outside the comfortable bounds when operating in the cruise control mode. 
\par
Next, let the ego vehicle switch to the vehicle following mode at time $t = 0$. The closed loop dynamics of the ego vehicle can be written as follows,
\begin{equation}\label{Eq: (Appx A) closed loop dynamics in follow}
\begin{aligned}
    \dot{y}_{e} &= v_{l} - v_{e} \\
    \dot{v}_{e} &= a_e \\
    \dot{a}_e &= 
    K_a a_e + C_{p}(t)\delta_e + C_{v}(v_{l} + (v_r(0) - v_l)e^{-\lambda t} - v_e) + w_e \\
    \dot{w}_{e} &= C_{q}(t)\delta_e + C_{s}(v_{l} + (v_r(0) - v_l)e^{-\lambda t} - v_{e})
\end{aligned}
\end{equation}
\par
Assuming constant $v_l$ and neglecting the exponentially vanishing terms, the equilibrium point of \eqref{Eq: (Appx A) closed loop dynamics in follow} is,
\begin{equation}\label{Eq: Equilibrium analysis when I_i = 1}
    \begin{aligned}
          & v_{e} = v_l\\
          & a_{e} = 0 \\
          & C_{p}(t)(y_{e} - hv_{e} - S_0) + C_{v}(v_{l} - v_{e}) + w_e = 0 \\
          & C_q(t)(y_{e} - hv_{e} - S_0) + C_s(v_{l} - v_e) = 0
    \end{aligned}
\end{equation}
\par
Therefore, $(y_e, v_e, a_e, w_e) = (hv_e +S_0, v_l, 0, 0)$ is the unique equilibrium of \eqref{Eq: (Appx A) closed loop dynamics in follow}. Let $z^T_e = \begin{pmatrix} y_{e} & v_{e} & a_{e} & w_{e}\end{pmatrix}$. Note that since $v_r(0)e^{-\lambda t} \rightarrow 0$ exponentially fast as $t \rightarrow \infty$, it has no effect on the stability and is ignored in the analysis that follows. By shifting the equilibrium to zero, \eqref{Eq: (Appx A) closed loop dynamics in follow} can be written in the following compact form,
\begin{equation}\label{Eq: Dynamics of vehicle m-1 in a platoon with leader}
    \dot{z}_{e} = (A_1 + D_{1}(t)) z_{e} + (B_1 + D_2(t)) v_{l}
\end{equation}
where,
\begin{equation}\label{Eq: Matrices A & B of dynamics of vehicle m-1 in a platoon with leader}
\begin{aligned}
    A_1 &=
  \begin{pmatrix}
    0 & -1 & 0 & 0\\
    0 & 0 & 1 & 0 \\
    C_{p} & -(hC_{p} + C_{v}) & K_a & 1 \\
    C_{q} & -(hC_{q} + C_{s}) & 0  & 0
  \end{pmatrix}, \:\:
%    D_{m-1}(t) = \begin{pmatrix}
%        0 & 0 & 0 & 0  & 0\\
%        0 & 0 & 0 & 0  & 0 \\
%        -C_pe^{-\lambda(t - t_{m-1})} & hC_pe^{-\lambda(t - t_{m-1})} & 0 & 0 & 0 \\
%        -C_qe^{-\lambda(t - t_{m-1})} & hC_qe^{-\lambda(t - t_{m-1})} & 0  & 0 & 0 \\
%        0 & 0 & 0 & 0 & 0
%    \end{pmatrix}, \:\:
  B_1 = \begin{pmatrix}
  1 \\
  0 \\
  C_v \\ 
  C_s
  \end{pmatrix}, \\
  D_{1}(t) &= \begin{pmatrix}
    0 & 0 & 0 & 0 \\
    0 & 0 & 0 & 0  \\
    -C_p e^{-\lambda t} & C_p h e^{-\lambda t} & 0 & 0 \\
    -C_q e^{-\lambda t} & C_q h e^{-\lambda t} & 0  & 0
  \end{pmatrix}, \:\:
    D_2(t) = \begin{pmatrix}
  0 \\
  0 \\
  -C_v e^{-\lambda t} \\ 
  - C_s e^{-\lambda t}
  \end{pmatrix}
\end{aligned}
\end{equation}
Since $D_{1}(t), D_{2}(t) \rightarrow 0$ as $t \rightarrow \infty$, the equilibrium of \eqref{Eq: Dynamics of vehicle m-1 in a platoon with leader} is exponentially stable if the equilibrium of the LTI system $\dot{z}_e = A_1z_e + B_1v_l$ is exponentially stable \cite{khalil2002nonlinear}. By taking Laplace transform of the corresponding LTI system, we arrive at the following relationship, 
\begin{equation}\label{Eq: (Appx A) Follow Laplace relationship between speeds}
    V_e(s) = G(s)V_l(s) + E_0(s)
\end{equation}
where $E_0(s)$ is due to non-zero initial condition of the ego vehicle and,
\begin{equation}\label{Eq: (Acceleration Control) G(s)}
\begin{aligned}
     G(s) &= \frac{C_{v} s^2 + (C_{p} + C_{s})s + C_{q}}{F(s)} \\
     F(s) &= s^4 - K_a s^3 + (hC_{p} + C_{v})s^2 + (C_{p} + hC_{q} + C_{s})s + C_{q}
\end{aligned}
    \end{equation}
\begin{figure}[t]
    \centering
    \includegraphics[width = 0.7\textwidth]{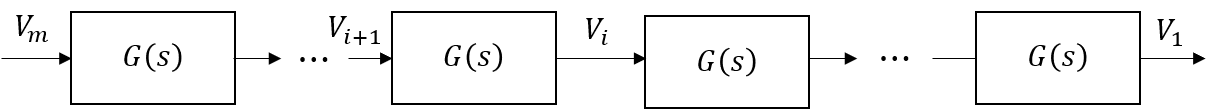}
    \vspace{0.2 cm}
    \caption{\sf Block diagram of a platoon of $m$ vehicles with vehicle $m$ as the leader}
    \label{Fig: line stability}
\end{figure}
\begin{figure}[t]
    \centering
    \includegraphics[width = 0.7\textwidth]{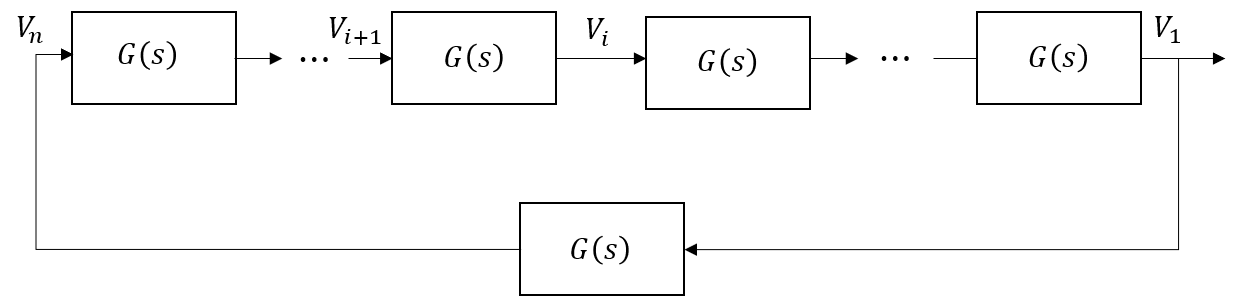}
    \vspace{0.2 cm}
    \caption{\sf Block diagram of a platoon with no leader}
    \label{Fig: ring stability}
\end{figure}
\par
We consider two possible cases. In the first case, the ego vehicle has joined a platoon of $m$ vehicles, $m \in \mathcal{N}$, with the leader in the cruise control mode, see Figure \ref{Fig: line stability}. In this case, the stability of the equilibrium is guaranteed if poles of $G(s)$ lie in the open left half of the $s$-plane. In the second case, the ego vehicle has joined a platoon with no leader, see Figure \ref{Fig: ring stability}. It is well-known \cite{khalil2002nonlinear} that a sufficient condition for exponential stability of the equilibrium of the system in Figure \ref{Fig: ring stability} is that poles of $G(s)$ lie in the open left half of the $s$-plane and $|G(j\omega)| \leq 1$, $\forall \omega \geq 0$. \par
For analyzing the performance of the controller in achieving passenger comfort, we assume that $C_p(t)$, $C_q(t)$ are constant and $v_r = v_l$ in the vehicle following mode in order to make use of the properties of LTI systems. However, we demonstrate by simulations that the proposed controller achieves smooth and comfortable acceleration during transient even if these assumptions do not hold. It follows from \eqref{Eq: (Appx A) Follow Laplace relationship between speeds} that,
\begin{equation}\label{Eq: Relatioship between A_i(s) and A_i+1(s)}
    A_e(s) = G(s)A_{l}(s) + \tilde{E}_0(s)
\end{equation}
where $G(s)$ is specified in \eqref{Eq: (Acceleration Control) G(s)} and $\tilde{E}_0(s) = sE_0(s) + G(s)v_l(0) - v_e(0)$ is due to non-zero initial condition. Since $|G(j\omega)| \leq 1$, $\forall \omega \geq 0$ in order to guarantee stability, if we choose the design constants such that $g(t) \geq 0$, $\forall t \geq 0$, we have,
\begin{equation}\label{Eq: (Appx A) VehFol relation for acceleration in time}
    |a_e(t)| \leq |a_{l}(t)| + \tilde{e}_0(t)
\end{equation}
where $\tilde{e}_0(t)$ is the inverse Laplace transform of $\tilde{E}_0(s)$ and is exponentially vanishing. Thus, the the following vehicles accelerate/decelerate at most as high as the vehicle ahead except, maybe, for an exponentially vanishing term. 
\par
We now analyze the attenuation of errors in, e.g., the position, of a vehicle upstream the platoon. Consider a platoon of $m$ vehicles, $m \in \mathcal{N}$, travelling with a constant speed such that for $i = 1, 2, \cdots, m-1$, $\delta_i(0) = 0$. Assume that there is a small perturbation in the position of the lead vehicle, i.e. vehicle $m$. Since we are assuming a small perturbation, we can neglect the dynamics of the acceleration limiter filter. By assuming $C_p(t) = C_p$, $C_q(t) = C_q$, $v_r = v_l$, for all of the following vehicles and taking Laplace transform of the equations in \eqref{Eq: (Appx A) closed loop dynamics in follow} we obtain,
\begin{equation}
    \frac{\Delta_{i}}{\Delta_{i+1}}(s) = G(s), \:\: i = 1, 2, \cdots, m-2
\end{equation}
where $\Delta_i(s), \Delta_{i+1}(s)$ are the Laplace transforms of $\delta_i, \delta_{i+1}$, respectively, and $G(s)$ is given in \eqref{Eq: (Acceleration Control) G(s)}. The necessary and sufficient condition for string error attenuation in the $\mathcal{L}_2$ sense is that $|G(j\omega)| \leq 1$, $\forall \omega \geq 0$ \cite{Ioannou.Chien.1993}. Note that this condition is already satisfied in order to ensure stability of the equilibrium. Moreover, a sufficient condition for string error attenuation in the $\mathcal{L}_{\infty}$ sense is that $|G(j\omega)| \leq 1$, $\forall \omega \geq 0$, and $g(t) \geq 0$, $\forall t \geq 0$ \cite{Ioannou.Chien.1993}. This condition is also satisfied in order to provide comfort. Finally, due to homogeneity of vehicles, string error attenuation extends to the speed and acceleration errors (see also \eqref{Eq: (Appx A) Follow Laplace relationship between speeds} and \eqref{Eq: Relatioship between A_i(s) and A_i+1(s)}). \par
%%%%Mar8%%%%So far, we have not explicitly considered the effect of frequent switching on stability. Recall that upon switching to the vehicle following mode, an ego vehicle switches to the cruise control mode only if the lead vehicle persistently violates the speed limit. Since a vehicle in the cruise control mode does not violate the speed limit, this situation can only occur when all vehicles are in the vehicle following mode. Consider such a case and assume that $v_{i}(0) = v_{i+1}(0) > V_f$, $a_i(0) = \delta_i(0) = 0$, $\forall i \in \mathcal{N}$, and that the ego vehicle switches to the cruise control mode at time $t = 0$. It follows that for all $t \geq 0$, $a_{min} \leq a_e(t) \leq 0$, $v_{e}(t) \leq v_l(t)$. Since $\delta_e(0) = 0$ and $\dot{\delta}_e(t) = v_{l}(t) - v_{e}(t) - ha_e(t) \geq 0$, $\forall t \geq 0$, we have $\delta_e(t) \geq 0$, $\forall t \geq 0$ and ego vehicle does not switch back to the vehicle following mode at future times. Therefore, the number of switching is finite and thus the switching does not affect the steady state behavior of vehicles. 
\item 
Let $n < n_c$, or equivalently $P > n(hV_f + S_0 + L)$. We claim that at least one vehicle must be operating in the cruise control mode at steady state. Suppose not; then from the equilibrium analysis in \eqref{Eq: Equilibrium analysis when I_i = 1} it follows that at steady state we have for every $i \in \mathcal{N}$ that $y_{i} = hv_i + S_0$, $v_{i} = v_{i+1}$. Therefore, $y_{i} = hv_i + S_0 = hv_{i+1} + S_0 = y_{i+1}$. Since $\sum_{i = 1}^{n}y_i = P - nL$, we obtain for every $i \in \mathcal{N}$ that $y_{i} = \frac{P}{n} - L$ and,
\begin{equation*}
    v_{i} = \frac{1}{h}(\frac{P}{n} - S_0 - L) > V_f
\end{equation*}
which cannot occur because, according to the designed logic, this violates the speed limit $V_f$. It follows from the equilibrium analysis in \eqref{Eq: Equilibrium analysis when I_i = 0}, \eqref{Eq: Equilibrium analysis when I_i = 1} and the stability of the equilibrium from the previous part that for every $i \in \mathcal{N}$, if vehicle $i$ operates in the cruise control mode at steady state, its speed converges to $V_f$. Moreover, it is required from the switching logic that $\delta_i \geq 0$. Hence, the relative spacing of vehicle $i$ converges to $hV_f + S_i$, where $S_i \geq S_0$ depends on the initial condition. On the other hand, if vehicle $i$ operates in the vehicle following mode, its speed converges to $V_f$, and its relative spacing converges to $hV_f + S_0$.
\item 
Let $n \geq n_c$, or equivalently $P \leq n(hV_f + S_0 + L)$, then all vehicles must be operating in the vehicle following mode at steady state. Otherwise, using a similar argument as before we arrive at the contradiction $P > n(hV_f + S_0 + L)$. Hence, for every $i \in \mathcal{N}$, the relative spacing of vehicle $i$ converges to $\frac{P}{n} - L$, and its speed converges to $\frac{1}{h}(\frac{P}{n} - S_0 - L)$. 
\end{enumerate}
\section{Proof of Theorem \ref{Theorem: (COOR) convergence to the desired configuration}}\label{Section: Appendix B}
Since the constant term $S_0$ has no effect on the stability, we neglect it in the analysis whenever needed. We consider the following three cases: first, consider the $1$-platoon asymmetrical desired configuration and let the ego vehicle be the desired leader. At $t = 0$, the ego vehicle sets its speed limit to $\alpha V_f$, $\alpha \in (0,1)$, and starts tracking $\alpha V_f$. Using the equilibrium equations in \eqref{Eq: Equilibrium analysis when I_i = 0}, it follows that the equilibrium of the closed loop dynamics of the ego vehicle is $v_e = v_{r} = \alpha V_f$, $a_e = 0$, and $w_e = 0$. Since at the equilibrium we must have $v_{i} = v_e = \alpha V_f$, $i \in \mathcal{N}$, and $\alpha < 1$, it follows that no other vehicle can be operating in the cruise control mode at steady state. Therefore, all of the vehicles switch to the vehicle following mode after a finite time and form a platoon of $n$ vehicles with the ego vehicle as its leader. Using $\alpha V_f$ instead of $V_f$ in the analysis from \eqref{Eq: (Appx A) Time trajectory of V_r in cruise} - \eqref{Eq: (Appx A) Follow Laplace relationship between speeds}, exponential stability of the equilibrium of the platoon immediately follows. Similarly, the equilibrium state of the platoon is exponentially stable when the ego vehicle resets its speed limit to $V_f$. Note that the total number of switching in the reference speed and spacing is finite, thus the switching does not affect stability.  \par
Consider the symmetrical desired configuration. Note that by construction, all vehicles eventually switch to the vehicle following mode in order to adjust their relative spacing by using the desired time headway constant and do not switch again at future times (see Figure \ref{Fig: Speed selection logic}). Thus, the switching does not affect the steady state behavior of vehicles. Without loss of generality, let the ego vehicle be in the vehicle following mode when it smoothly increases its time headway constant from $h$ to $h_1$ at time $t = 0$, where $h_1$ is such that $h_1 V_f + S_0 = \frac{P}{n} - L$. The closed loop dynamics of the ego vehicle can be written as follows,
\begin{equation}\label{Eq: (COOR) linear set of equations in z}
\begin{aligned}
    \dot{y}_{e} &= v_{l} - v_{e} \\
    \dot{v}_{e} &= a_e \\
    \dot{a}_e &=  K_a a_e + C_{p}(t)(y_e - h(t)v_e)  + C_{v}(v_{l} - v_{e}) + w_{e} \\
    \dot{w}_{e} &= C_{q}(t)(y_e - h(t)v_e)  + C_{s}(v_{l} - v_{e})
%%%%Feb2%%%%    \dot{v}_{r} &=
%%%%Feb2%%%%        \text{sat}[p(V_f - v_{r_1})]
\end{aligned}
\end{equation}
where $h(t) = h_1 + (h - h_1)e^{-\lambda t}$. The design parameters $C_p(t)$, $C_q(t)$ are also smoothly changed to desired values, e.g., $C_p(t) = \tilde{C}_p + (C_p - \tilde{C}_p)e^{-\lambda t}$, where $\tilde{C}_p > 0$ is a design constant to be chosen. 
Note that if the ego vehicle was operating in the cruise control mode at $t = 0$, its closed loop dynamics would have been the same as \eqref{Eq: (COOR) linear set of equations in z} except that $C_p(0) = C_q(0) = 0$, and $v_r = v_l + (v_r(0) - v_l)e^{-\lambda t}$. Let $z^T_e = \begin{pmatrix} y_{e} & v_{e} & a_{e} & w_{e}\end{pmatrix}$. 
%%%%Feb2%%%%For $i \in \mathcal{N}$, let $z^T_i = \begin{pmatrix} y_{i} & v_{i} & a_{i} & w_{i}\end{pmatrix}$ and $Z^T = \begin{pmatrix}
%%%%Feb2%%%%z^T_1 & z^T_2 & \cdots & z^T_{n} \end{pmatrix}$. 
%%%%Feb2%%%%Since $v_{r_1}$ converges exponentially fast (see the Lyapunov-like function \eqref{Eq: (Acceleration Control) Lyapunov like function}), it does not affect the stability properties of the equilibrium of \eqref{Eq: (COOR) linear set of equations in z} and is neglected henceforth.
We can write the closed loop dynamics \eqref{Eq: (COOR) linear set of equations in z} in the following compact form,
\begin{equation}\label{Eq: (COOR) Dynamics of vehicle 1 with symm config begining with veh fol}
    \dot{z}_e = (A_2 + D_{3}(t)) z_{e} + B_2 v_{l}
\end{equation}
where, 
\begin{equation*}\label{Eq: Matrix D_2(t) when vehicle 2 changes its time headway}
A_2 =
  \begin{pmatrix}
    0 & -1 & 0 & 0\\
    0 & 0 & 1 & 0 \\
    \tilde{C}_{p} & -(h_1\tilde{C}_{p} + C_{v}) & K_a & 1 \\
    \tilde{C}_{q} & -(h_1\tilde{C}_{q} + C_{s}) & 0  & 0
  \end{pmatrix}, \:\:
%%%%Feb2%%%%    A_h =
%%%%Feb2%%%%  \begin{pmatrix}
%%%%Feb2%%%%    0 & -1 & 0 & 0\\
%%%%Feb2%%%%    0 & 0 & 1 & 0 \\
%%%%Feb2%%%%    C'_{p} & -(h_1C'_{p} + C'_{v}) & K'_a & 1 \\
%%%%Feb2%%%%    C'_{q} & -(h_1C'_{q} + C'_{s}) & 0  & 0
%%%%Feb2%%%%  \end{pmatrix}, \:\: 
%  D_h(t) = \begin{pmatrix}
%    0 & 0 & 0 & 0 \\
%    0 & 0 & 0 & 0 \\
%    \Delta C_p & -(h_d \Delta C_p + C'_p \Delta h + \Delta C_p \Delta h_d e^{-\lambda t}) & \Delta K_a & 0 \\
%    0 & -C_q\Delta h e^{-\lambda t} & 0  & 0
%  \end{pmatrix}, \:\: 
    B_2 = \begin{pmatrix}
  1 \\
  0 \\
  C_v \\ 
  C_s
  \end{pmatrix}
%%%%Jan25%%%%    \hat{B}_h = \begin{pmatrix}
%%%%Jan25%%%%  0 \\
%%%%Jan25%%%%  0 \\
%%%%Jan25%%%%  C'_v \\ 
%%%%Jan25%%%%  C'_s
%%%%Jan25%%%%  \end{pmatrix}
\end{equation*}
in which $D_3(t) \rightarrow 0$ exponentially fast as $t \rightarrow \infty$ and is not brought here for the sake of brevity. The equilibrium of \eqref{Eq: (COOR) Dynamics of vehicle 1 with symm config begining with veh fol} is exponentially stable if the equilibrium of the LTI system,
\begin{equation}\label{Eq: (COOR) LTI Dynamics of vehicle 1 with symm config}
    \dot{z}_e = A_2 z_e + B_2v_l
\end{equation}
is exponentially stable \cite{khalil2002nonlinear}. In order to find the stability condition, we use another representation of \eqref{Eq: (COOR) LTI Dynamics of vehicle 1 with symm config} by taking Laplace transform of both sides and deriving the following relationship,  
\begin{equation}\label{Eq: (COOR) Relationship between V_1 and V_2 in the symm config}
    V_e(s) = H_1(s)V_l(s) + E_0(s)
\end{equation}
where $E_0(s)$ is due to non-zero initial condition of the ego vehicle and,
\begin{equation*}
    H_1(s) = \frac{\tilde{C}_p s + \tilde{C}_q}{s^4 - K_a s^3 + (h_1 \tilde{C}_p + C_v)s^2 + (\tilde{C}_p + h_1 \tilde{C}_q + C_s)s + \tilde{C}_q}
\end{equation*}
is a transfer function similar to $G(s)$ in \eqref{Eq: (Acceleration Control) G(s)} only with different parameters. Since the desired configuration is symmetrical, i.e., $h_d = h_1$ for all vehicles, the closed loop dynamics of each vehicle on the ring road becomes the same as \eqref{Eq: (COOR) linear set of equations in z} after the final switching time only with (possibly) different initial values of the design parameters and reference speed. Hence, \eqref{Eq: (COOR) Relationship between V_1 and V_2 in the symm config} holds for all vehicles in the corresponding LTI system and we have a similar block diagram as in Figure \ref{Fig: ring stability}. Using a similar argument to the proof of Theorem \ref{Theorem: (Acceleration Control) Robust veh following and speed tracking}, a sufficient condition for stability is that poles of $H_1(s)$ lie in the left half of the $s$-plane and $|H_1(j\omega)| \leq 1$, $\forall \omega \geq 0$. It can be verified that if the design constants $K_a, C_p, C_q$ satisfy $K_a C_p + C_q < 0$, and $\tilde{C}_p$, $\tilde{C}_q$ are chosen such that $h_1 \tilde{C}_p = hC_p$, $h_1 \tilde{C}_q = hC_q$, the stability conditions are guaranteed.
\begin{figure}[t]
    \centering
    \includegraphics[width = 0.7\textwidth]{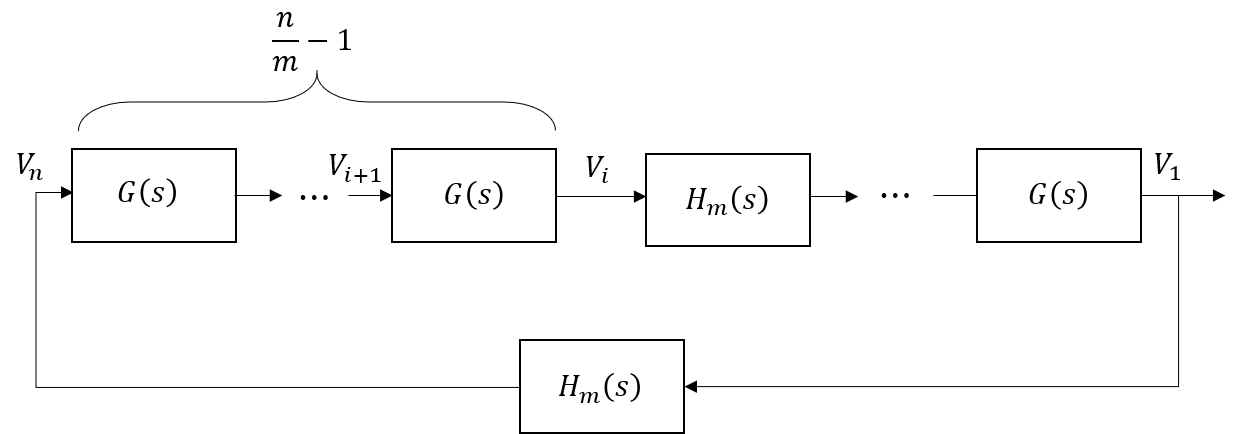}
    \vspace{0.2 cm}
    \caption{\sf Block diagram of the $m$-platoon symmetrical configuration with vehicle $n$ as the desired leader of a platoon}
    \label{Fig: ring stability in the m platoon configuration}
\end{figure}
%%%%Feb9%%%%one can obtain the following relationship for the speed of the cego vehicle by successive use of \eqref{Eq: (COOR) Relationship between V_1 and V_2 in the symm config} for all vehicles,
%%%%Feb9%%%%\begin{equation*}
%%%%Feb9%%%%    V_e(s) = \frac{M(s)}{1 - H^n_1(s)}
%%%%Feb9%%%%\end{equation*}
%%%%Feb9%%%%where $M(s)$ is due to non-zero initial condition of the system of vehicles. %%%%Feb3%%%%Note that since $H^n_1(0) = 1$, the only zero at the origin cancels from the numerator and the denominator. 
\par
Finally, consider the $m$-platoon symmetrical desired configuration,  $1 < m \leq \frac{n}{2}$, and let the ego vehicle be a desired leader. From the result for the $1$-platoon asymmetrical configuration, it follows that after a finite time, all of the desired followers of the ego vehicle switch to the vehicle following mode. Also, similar to the symmetrical configuration, all vehicles eventually operate in the vehicle following mode and do not switch at future times. Assume that the ego vehicle sets its reference spacing to $h_m v_e + S_0$ at $t = 0$, where $h_m$ is the desired time headway constant at the free flow speed calculated by the ego vehicle. The closed loop dynamics of the ego vehicle can be written as \eqref{Eq: (COOR) linear set of equations in z} with a different $h_m$ and (possibly) different design constants and reference speed trajectory. Following a similar argument as in the previous scenario, the following relationship can be found for the corresponding LTI system,   
\begin{equation}\label{Eq: (COOR) Relationship between V_1 and V_2 in the m-platoon config}
        V_e(s) = H_m(s)V_l(s) + E_0(s)
\end{equation}
where $E_0(s)$ is due to non-zero initial condition and,
\begin{equation*}
    H_m(s) = \frac{\tilde{C}_p s + \tilde{C}_q}{s^4 - K_a s^3 + (h_m \tilde{C}_p + C_v)s^2 + (\tilde{C}_p + h_m \tilde{C}_q + C_s)s + \tilde{C}_q}
\end{equation*}
where the constants $\tilde{C}_p$, $\tilde{C}_q$ are to be chosen. Note that the Laplace domain relationship between the speeds of the following vehicles of a platoon is different than \eqref{Eq: (COOR) Relationship between V_1 and V_2 in the m-platoon config} and was derived in \eqref{Eq: (Appx A) Follow Laplace relationship between speeds}. Figure \ref{Fig: ring stability in the m platoon configuration} shows the block diagram of this case after the final switching time.
%%%%Feb9%%%%\begin{equation}\label{Eq: (COOR) Relationship between V_2 and V_3 in the m-platoon config}
%%%%Feb9%%%%        V_l(s) = G(s)V_{\hat{l}}(s) + E'_0(s)
%%%%Feb9%%%%\end{equation}
%%%%Feb9%%%%where $G(s)$ is specified in \eqref{Eq: (Acceleration Control) G(s)}, and $E'_0(s)$ is due to non-zero initial condition of the lead vehicle. By successive use of \eqref{Eq: (COOR) Relationship between V_1 and V_2 in the m-platoon config}, \eqref{Eq: (COOR) Relationship between V_2 and V_3 in the m-platoon config}, it follows that,
%%%%Feb9%%%%\begin{equation*}
%%%%Feb9%%%%    V_e(s) = \frac{M'(s)}{1 - H^{\frac{n}{m}}_m(s)G^{n-\frac{n}{m}}(s)}
%%%%Feb9%%%%\end{equation*}
%%%%Feb9%%%%where $M'(s)$ is the effect of initial condition of all vehicles. 
By using a similar argument as in the proof of Theorem \ref{Theorem: (Acceleration Control) Robust veh following and speed tracking}, a sufficient condition for stability is that poles of $H_m(s)G(s)$ lie in the open left half of the $s$-plane, and $|H_m(j\omega)G(j\omega)| \leq 1$, $\forall \omega \geq 0$. Since the transfer function $G(s)$ is stable and satisfies $|G(j\omega)| \leq 1$, $\forall \omega \geq 0$, a sufficient condition for stability is that $H_m(s)$ is stable and $|H_m(j\omega)| \leq 1$, $\forall \omega \geq 0$. These conditions are automatically guaranteed if $K_a C_p + C_q < 0$, and $h_m \tilde{C}_p = hC_p$, $h_m \tilde{C}_q = hC_q$. 

%Finally, consider the case of transitioning from a non-symmetrical to a symmetrical configuration. Let $\bar{h}$ be the calculated time headway constant for each vehicle. Then, the closed loop dynamics of the system can be written in the form, 
%\begin{equation}\label{Eq: (COOR)(Acceleration Control) matrix form of equations in z}
%\begin{aligned}
%   \dot{Z} &= (\tilde{A} + \tilde{D}(t)) Z \\ 
%   \kscomment{\sum_{i = 1}^{n}}& \kscomment{\bar{y}_{i} = P - n(hV_f + S_0)} 
%\end{aligned}
%\end{equation}
%where,
%\begin{equation*}\label{Eq: (Acceleration Control) block circulant matrix A}
%   \tilde{A} = \begin{pmatrix}
%                \bar{A}_f & B & 0 & \cdots & 0 \\
%                0 & \bar{A}_f & B & \cdots & 0 \\
%                \vdots & \vdots & \vdots & \ddots & \vdots \\
%                B & 0 & 0 & \cdots & \bar{A}_f \\
%              \end{pmatrix}, \:\: 
%    \tilde{D}(t) = \begin{pmatrix}
%                D(t) & 0 & 0 & \cdots & 0 \\
%                0 & D(t) & 0 & \cdots & 0 \\
%                \vdots & \vdots & \vdots & \ddots & \vdots \\
%                0 & 0 & 0 & \cdots & D(t) \\
%              \end{pmatrix}
%\end{equation*}
%in which $\bar{A}_f$ is of the form of $A_f$ in \eqref{Eq: Matrices A & B of dynamics of vehicle m-1 in a platoon with leader} with $h$ replaced by $\bar{h}$, and $D(t)$ is of the form of $D_2(t)$ in \eqref{Eq: Matrix D_2(t) when vehicle 2 changes its time headway} with $\Delta h = \bar{h} - h$. Since it was shown in Theorem \ref{Theorem: (Acceleration Control) Robust veh following and speed tracking} that the homogeneous part of \eqref{eq }

\bibliographystyle{ieeetr}
\bibliography{arXiv_Main_IV20}
\end{document}